\documentclass[aps,prl,reprint,superscriptaddress,nofootinbib,longbibliography,floatfix]{revtex4-2}

\usepackage[T1]{fontenc}
\usepackage[utf8]{inputenc}
\usepackage{lmodern}
\usepackage{microtype}
\usepackage{amsmath,amssymb,bm,mathtools}
\usepackage{amsthm}
\usepackage{booktabs}
\usepackage{tabularx}
\usepackage{array}
\usepackage{enumitem}
\usepackage{float}
\usepackage{chngcntr}
\usepackage{graphicx}
\usepackage[colorlinks=true,citecolor=blue,linkcolor=blue,urlcolor=blue]{hyperref}

\newcommand{\Pf}{\operatorname{Pf}}
\newcommand{\polar}{\operatorname{polar}}
\newcommand{\cS}{\mathcal S}
\newcommand{\cR}{\mathcal R}
\newcommand{\TFI}{\mathrm{TFI}}
\newcommand{\OBC}{\mathrm{OBC}}

\newcommand{\diag}{\operatorname{diag}}
\newcommand{\sgn}{\operatorname{sgn}}
\newcommand{\supp}{\operatorname{supp}}

\newcommand{\dd}{\mathrm d}
\newcommand{\e}{\mathrm e}

\newcommand{\ssection}[2]{%
  \refstepcounter{section}%
  \section*{\texorpdfstring{\thesection.\quad #2}{\thesection. #2}}%
  \label{#1}}
\newcommand{\ssubsection}[2]{%
  \refstepcounter{subsection}%
  \subsection*{\texorpdfstring{\thesubsection.\quad #2}{\thesubsection. #2}}%
  \label{#1}}

\newtheorem{theorem}{Theorem}[section]
\newtheorem{proposition}[theorem]{Proposition}
\newtheorem{lemma}[theorem]{Lemma}

\theoremstyle{definition}
\newtheorem{definition}[theorem]{Definition}

\begin{document}

\title{Quantized Stabilizer-R\'enyi Boundary Response across Fermionic SPT Transitions}

\author{M.~A.~Rajabpour}
\affiliation{Instituto de Física, Universidade Federal Fluminense, Av.~Gal.~Milton Tavares de Souza s/n, Gragoatá, 24210-346, Niterói, RJ, Brazil}
\date{\today}

\begin{abstract}
Open boundaries host symmetry-protected Majorana modes, yet their imprint on the stabilizer R\'enyi entropy is obscured by a nonuniversal volume law. At R\'enyi index $\alpha=1/2$, we isolate a bulk-subtracted boundary response in free-fermion BDI chains using exact finite-open-chain relations, large-$L$ Pfaffian evaluations, and direct positive-weight checks. Across a mass inversion, this response approaches $|\Delta\omega|\ln 2$, where $\Delta\omega$ is the change in winding number. The response survives primitive bulk deformations, the tested local boundary perturbations, and broken bulk duality. Thus, for the families studied here, the boundary response counts the Majorana channels that change the fermionic SPT index.
\end{abstract}

\maketitle

\paragraph{Introduction.---}
One-dimensional symmetry-protected topological (SPT) phases are
gapped, short-range-entangled phases that remain distinct under
symmetry-preserving deformations and support protected boundary
degrees of freedom~\cite{Haldane1983,AKLT1987,ChenScience2012,
ChenGuWen2011,Schuch2011,Senthil2015,ChiuRMP2016}. Free-fermion
topological classifications are well established~\cite{Kitaev2001,
Schnyder2008,Kitaev2009,Ryu2010,FidkowskiKitaev2010,FidkowskiKitaev2011}. Since no conventional local order parameter
generally exists, detection rests on nonlocal and entanglement
structures, including string order, characteristic
entanglement-spectrum degeneracies, projective symmetry
representations, strange correlators, and many-body topological
invariants~\cite{denNijsRommelse1989,Pollmann2010,Turner2011,
PollmannTurner2012,You2014,Shapourian2017}. This raises a sharp question: can a global quantity assembled from Pauli expectation values reveal a quantized boundary signature across an SPT transition?

The stabilizer R\'enyi entropy (SRE) turns the Pauli spectrum into a
many-body measure of nonstabilizerness~\cite{Leone2022}. Its extensive
scaling, dynamics, critical behavior, relation to entanglement, and
numerical evaluation have been investigated with tensor networks,
Pauli sampling, Markov chains, Monte Carlo methods, and exact
free-fermion constructions~\cite{Oliviero2022,HaugPiroliMPS2023,
HaugPiroli2023,LamiCollura2023,TarabungaPRXQ2023,Rattacaso2023,
Odavic2023,TarabungaPRL2024,AhmadiGreplova2024,Frau2024,
Passarelli2024,Falcao2025,Ding2025,Collura2026,Hallam2026}.
Wave-function participation entropies provide the natural boundary
framework: Shannon and R\'enyi probabilities in local bases encode
boundary free energies, corner terms, and boundary-condition-changing
operators~\cite{Stephan2009,Stephan2011,AlcarazRajabpour2013,
Stephan2014,AlcarazRajabpour2014,Luitz2014,Tarighi2022}.
A boundary-CFT formulation of critical SREs identifies universal
boundary constants, logarithmic end contributions, and defect-fusion
data~\cite{HoshinoPRX2026,HoshinoPRL2026}. Exact
stabilizer--Shannon correspondences transfer participation-entropy
CFT results to critical quadratic chains~\cite{RamirezRajabpour2025},
while exact polyphase reductions provide lattice realizations beyond
the primitive models~\cite{KhassehRamirezRajabpour2026}. These results
identify the subleading sector as the natural place to seek a
topological response.

For real Gaussian states, the stabilizer--Shannon correspondence maps
the SRE to the computational-basis Shannon--R\'enyi entropy of a
half-filled number-conserving chain on a doubled lattice~\cite{
RamirezRajabpour2025}. Applied to the open TFI family, the doubled
one-body Hamiltonian is the SSH chain~\cite{SuSchriefferHeeger1979},
with the uniform XX chain recovered at criticality. Its topological
boundary structure is also visible in conventional entanglement
diagnostics~\cite{RyuHatsugai2006,Fidkowski2010}. The same boundary
quantity thus has two exact realizations: a Pauli-spectrum response in
the superconducting BDI chain and a computational-basis participation
response in its SSH partner.

Connections between magic and SPT order have also been formulated
through symmetry-protected magic~\cite{Ellison2021}. The total SRE
need not distinguish neighboring one-dimensional SPT
phases~\cite{Catalano2026,LiChang2026}, while the full Pauli spectrum
can retain finer information~\cite{LiChang2026}. Non-Clifford doping and
multipartite subtraction define a distinct topological-magic
response~\cite{Nehra2025}. The object addressed here is the
subleading open-boundary term of the undoped complete chain after
independent removal of the nonuniversal bulk contribution. For
finite-range free-fermion BDI chains, a Laurent symbol $f(z)$ organizes the winding
index, protected Majorana end channels, and the unit-circle zeros
controlling the critical Majorana content~\cite{
VerresenMoessnerPollmann2017,VerresenJonesPollmann2018,
JonesVerresen2019,VerresenThorngren2021,JonesVerresen2023}.

Here we show that this boundary term is quantized. At R\'enyi index
$1/2$, the independently bulk-subtracted difference between matched
massive phases converges to
$\mathcal R_{1/2}^{\rm SPT}=|\Delta\omega|\ln2$. An exactly
shifted-decimated family proves multiplication of the complete
crossover by the number of topology-changing Majorana channels and
independence from the winding common to the two phases. A primitive
deformation outside exact decimation and bulk duality, together with
local termination changes, tests the one-channel value beyond the
solvable construction. Through the Gaussian correspondence, the same
quantization yields a computational-basis Shannon--R\'enyi boundary
difference $2|\Delta\omega|\ln2$ in the associated SSH-type chain.
The response is thus encoded both in the Pauli distribution of the
superconducting BDI chain and in the occupation-probability
distribution of its number-conserving chiral partner.

\paragraph{Finite open-chain formulation.---}
We consider a real BDI chain of length $L$,
\begin{equation}
 H=\frac{i}{2}\sum_{m,n=1}^{L}Z_{mn}\widetilde\gamma_m\gamma_n.
 \label{eq:model}
\end{equation}
Here $Z_{mn}=t_{n-m}$ is the finite Toeplitz matrix associated with the Laurent symbol $f(z)=\sum_a t_a z^a$. For invertible $Z$, the orthogonal matrix entering the Gaussian formulas is the polar factor
$G=\polar(Z)=Z(Z^{\mathsf T}Z)^{-1/2}$. Equation~\eqref{eq:model} covers the finite-range BDI models considered here; their explicit OBC BdG matrices and correlators are collected in the Supplemental Material~\cite{SM}. The $1/2$-SRE is the absolute-minor sum~\cite{RamirezRajabpour2025,KhassehRajabpourFiniteT2026}
\begin{equation}
 \cS_L(G)=\sum_{k=0}^{L}\ \sum_{|I|=|J|=k}|\det G[I,J]|.
 \label{eq:minor-sum}
\end{equation}
The entropy is $M_{1/2}=2\ln\cS_L-L\ln2$. The same minors also have an exact occupation-basis interpretation.  The half-filled ground state of the number-conserving chiral one-body matrix
\begin{equation}
 \mathbb H_Z=
 \begin{pmatrix}
  0&Z\\
  Z^{\mathsf T}&0
 \end{pmatrix}
 \label{eq:chiral-partner}
\end{equation}
has probabilities $p(I,J)=2^{-L}|\det G[I,J]|^2$, and therefore
$M_\alpha(G)=H_\alpha^{\rm ch}(G)$, where $H_\alpha^{\rm ch}$ is the
computational-basis Shannon--R\'enyi entropy.  For
$Z=hI_L+JS_L$, Eq.~\eqref{eq:chiral-partner} is the open SSH chain
with intracell hopping $h$ and intercell hopping $J$.  The derivation,
including the zero-mode convention, is given in Sec.~S2.2 of the
Supplemental Material~\cite{SM}. There are exponentially many terms in Eq.~\eqref{eq:minor-sum}. Interleaving the row and column labels of $G$ defines an antisymmetric checkerboard matrix $R(G)$. In the coherent-minor-sign chamber containing the open transverse-field Ising chain and the shifted-decimated families below, a fixed antisymmetric selector $\mathbb J$ converts every absolute minor to the sign required by the minor-summation formula. We obtain
\begin{equation}
 \cS_L(G)=\left|\Pf\!\left[R(G)+\mathbb J\right]\right|.
 \label{eq:pfaffian}
\end{equation}
The selector, zero-mode convention, coherent-sign chambers, and
limitations of the Pfaffian reduction are detailed in the Supplemental
Material~\cite{SM}. Related minor-summation Pfaffians already solve the
finite-temperature open critical TFI chain~\cite{KhassehRajabpourFiniteT2026};
here they retain both physical ends away from criticality and extract their
$O(1)$ contribution with polynomial cost. The generic anisotropic $XY$
obstruction and the broader OBC model list are also deferred to the
Supplemental Material~\cite{SM}. This Pfaffian condition is logically
distinct from arbitrary-index decimation, which is a tensor-product
statement and need not imply coherent minor signs.

\paragraph{An exact SPT laboratory.---}
The family
\begin{equation}
 f_{r,d}(z;h)=z^r(h+z^d),\qquad r\geq0,\quad d\geq1,
 \label{eq:family}
\end{equation}
separates a background boundary index from the channels that change topology. For $L=r+d\ell$, independent row and column permutations reduce the active open-chain block to $d$ TFI polar factors of size $\ell$, together with $r$ exact zero-mode blocks. For canonical zero-mode occupation sectors this gives the finite-size identity
\begin{equation}
 M_{\alpha}^{(r,d)}(r+d\ell;h)
 =r\ln2+d\,M_{\alpha}^{\TFI}(\ell;h),
 \label{eq:decimation}
\end{equation}
valid for arbitrary $\alpha$. Although Eq.~\eqref{eq:decimation} holds for arbitrary $\alpha$, it does not imply an $\alpha$-independent primitive boundary constant; the boundary-response result below is specific to $\alpha=1/2$. Polyphase reductions for periodic chains and finite intervals are known~\cite{RamirezRajabpour2025,KhassehRamirezRajabpour2026}; for the complete finite OBC chain, the additional ingredients are the active length $L-r$ and the $r$ exact zero-mode blocks. Their derivation, sector convention, and the block Pfaffian at $\alpha=1/2$ are given in the Supplemental Material~\cite{SM}. Equation~\eqref{eq:decimation} supplies exact index multiplication, while the primitive model below tests the one-channel response without tensor-product reduction.

The roots of Eq.~\eqref{eq:family} give, at $h=1$,
\begin{equation}
 (c,\omega_{\mathrm c})=\left(\frac d2,r\right),
 \qquad \omega_{<}=r+d,\qquad \omega_{>}=r,
 \label{eq:labels}
\end{equation}
where $<$ and $>$ denote the sides with the $d$ mobile roots inside and outside the unit disk. Equation~\eqref{eq:decimation} recovers the known open-boundary critical logarithm,
\begin{equation}
 \ln\cS_L^{(r,d)}(1)=s_{\mathrm c}L-\frac{c}{4}\ln L+O(1),
 \label{eq:critical}
\end{equation}
independently of $r$. For the present factorizing boundary class, Eq.~\eqref{eq:critical} reproduces the BCFT logarithm identified in Refs.~\cite{HoshinoPRL2026,RamirezRajabpour2025}.

The $r$ roots pinned at the origin describe localized modes that survive at criticality but do not add scale-free bulk fields. The $d$ roots reaching the unit circle instead create $d$ massless Majorana modes and hence $c=d/2$. Thus the known logarithm is assigned to critical channels, while the common background winding remains in the $O(1)$ boundary sector.

\paragraph{Massive boundary response.---}
For a gapped point in either phase, let
$a\in\{\mathrm{top},\mathrm{triv}\}$ label the two sides of the
transition. We write the large-$L$ expansion as
\begin{equation}
 \ln\cS_L^{a}(m)
 =
 Ls_{a}(m)+C_{a}(m)
 +O\!\left(\mathrm{poly}(L)e^{-mL}\right),
 \label{eq:boundary-expansion}
\end{equation}
where the bulk densities $s_a(m)$ are determined independently and
$m=\xi^{-1}$ is the inverse localization length of the topology-changing root.
The reduced boundary response is then
\begin{equation}
 \begin{aligned}
 \cR_{1/2}(m)
 &\equiv C_{\rm top}(m)-C_{\rm triv}(m),\\
 \cR_{1/2}^{\rm SPT}
 &\equiv \lim_{m\to0^+}\cR_{1/2}(m).
 \end{aligned}
 \label{eq:response}
\end{equation}
Because $M_{1/2}=2\ln\cS_L-L\ln2$, the corresponding difference
between the $O(1)$ terms of $M_{1/2}$ is simply
$2\cR_{1/2}(m)$.

Three regimes are visible before taking the limit in Eq.~\eqref{eq:response}. For $L/\xi\ll1$, the two ends overlap strongly and the state cannot resolve the sign of the mass; the two scaling branches meet at the critical boundary constant. At $L/\xi=O(1)$, the topological branch is nonmonotonic because the emerging Majoranas compete with critical fluctuations, whereas the trivial branch decreases monotonically. For $L/\xi\gg1$, local bulk correlations have saturated and the two edge sectors differ by an $O(1)$ amount, while their remaining finite-size corrections decay with the edge-mode overlap. This is the regime from which $C_{\rm top}$ and $C_{\rm triv}$ are extracted.

The subtraction must be performed at the level of the fitted thermodynamic expansion, not by directly comparing dual-looking finite chains. Exact TFI duality happens to make the bulk densities equal, but a generic BDI deformation generates an analytic odd contribution $L[s_{\rm top}(m)-s_{\rm triv}(m)]$. Such a term can mimic or overwhelm a boundary plateau. We therefore retain the independent bulk-slope difference $\Delta s(m)\equiv s_{\rm top}(m)-s_{\rm triv}(m)$ at each mass and extract the $O(1)$ response only after removing it; equivalent separate-sector fits give the same constants. Local boundary contributions may remain at finite $m$; universality is tested by whether their difference has a common $m\to0^+$ limit.

For the TFI chain, the Pfaffian data show two distinct scaling branches in $L/\xi$. Their difference starts at zero at criticality and approaches $\ln2$ in the massive scaling limit. In the exact family~\eqref{eq:family}, the zero-mode contribution
$r\ln2$ is common to both sides and cancels. Moreover, duality makes
the two bulk densities equal at the paired points $h$ and $h^{-1}$,
so the direct finite-size difference is already bulk-subtracted.
For this exactly dual family, let $L=r+d\ell$ and define the finite-size response by
\begin{equation}
 \Delta_{r,d}^{(L)}(x)
 =
 \ln\cS_{L}^{(r,d)}\!\left(\mathrm e^{-x/\ell}\right)
 -
 \ln\cS_{L}^{(r,d)}\!\left(\mathrm e^{x/\ell}\right),
 \label{eq:Delta-rd}
\end{equation}
where
$x=\ell|\ln h|=(L-r)|\ln h|/d\geq0$.
We denote the corresponding one-channel TFI crossover by
$\Delta_{\TFI}^{(\ell)}(x)\equiv\Delta_{0,1}^{(\ell)}(x)$.
Equation~\eqref{eq:decimation} then gives the exact finite-size identity
\begin{equation}
 \Delta_{r,d}^{(r+d\ell)}(x)
 =
 d\,\Delta_{\TFI}^{(\ell)}(x),
 \qquad
 d=|\omega_<-\omega_>|.
 \label{eq:multiplication}
\end{equation}
Thus the full crossover, not merely its asymptote, counts the channels
that cross the unit circle and ignores the topological background that
survives at criticality. Figure~\ref{fig:exact} displays this exact
index-jump multiplication.

\begin{figure}[t]
    \centering
    \includegraphics[width=\columnwidth]{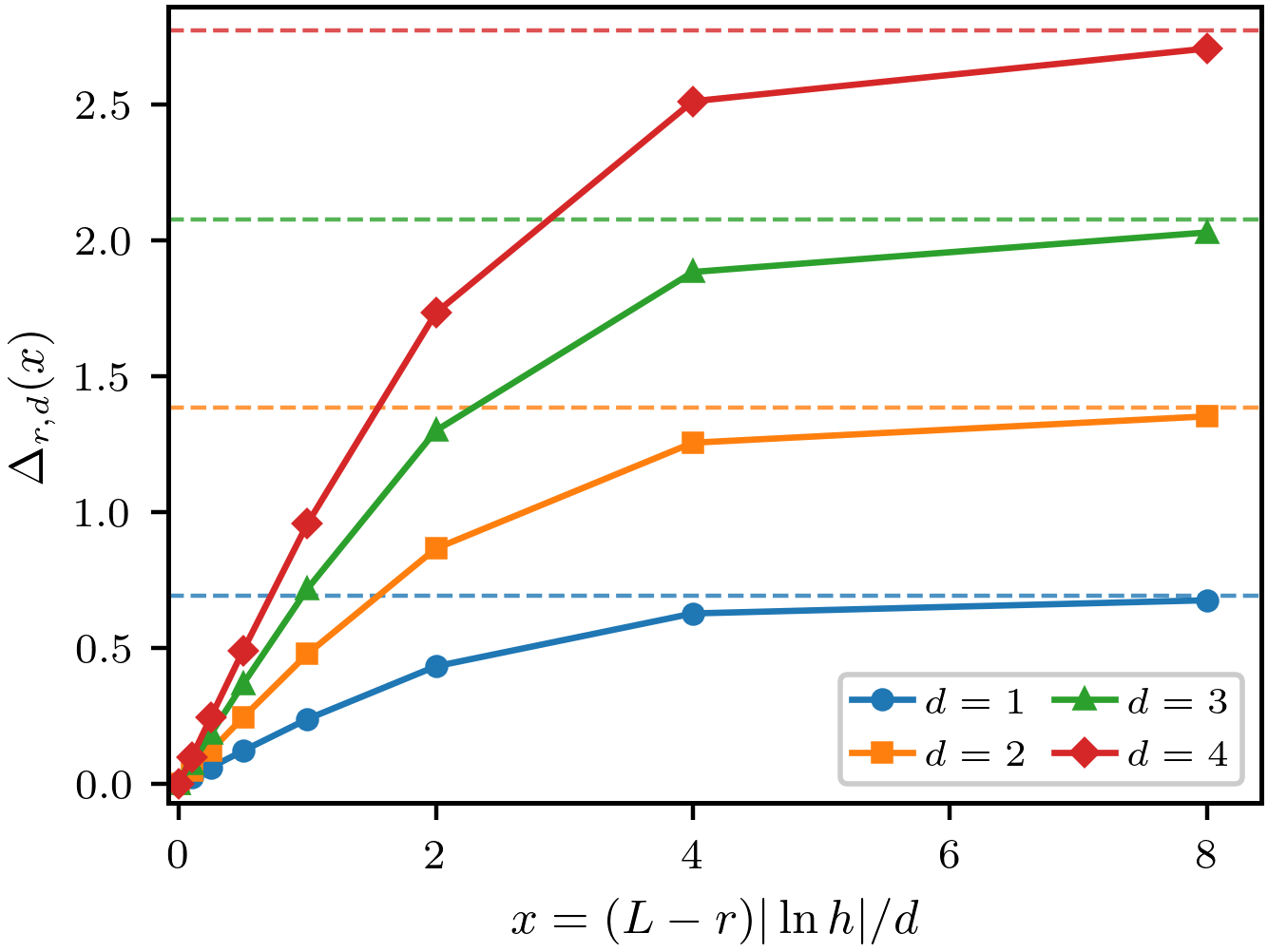}
    \caption{Bulk-subtracted crossover for the shifted-decimated family.
    The response is multiplied by $d=|\Delta\omega|$ and approaches
    $d\ln 2$ (dashed lines). The underlying TFI block size is $\ell=640$.
    }
    \label{fig:exact}
\end{figure}

The multiplication law is exact at every finite compatible size. For example, at the common scaling coordinate $x=8$, primitive TFI block size $\ell=640$ gives $0.676789$ for one channel and precisely $d$ times this value for $d=2,3,4$. Increasing $x$ removes the residual edge overlap and the curves approach $d\ln2$. Varying $r=0,1,2,4$ at fixed $d$ leaves the complete difference unchanged to machine precision even though the individual ground spaces and boundary constants are different. The exact family therefore rules out absolute-index counting: the response is sensitive to the roots that cross, not to zero modes already present on both sides.

The numerical one-channel limit, combined with the exact multiplication law, supports
\begin{equation}
 \cR_{1/2}^{\rm SPT}=|\Delta\omega|\ln2,
 \label{eq:main-result}
\end{equation}
exactly in its multiplication by $|\Delta\omega|$ and numerically in the primitive one-channel constant.

Decimation alone cannot establish universality: in Eq.~\eqref{eq:family}, $d$ is simultaneously the number of TFI copies and $|\Delta\omega|$. Periodically modulated bidiagonal chains furnish nondecimated exact-Pfaffian controls, but remain in the coherent weighted-path class. We therefore use a range-two deformation whose fixed selector demonstrably fails, while also breaking exact replication and bulk duality. Its symbol is
\begin{equation}
 f_{\lambda}(z;h)=h-z^{-1}+\lambda z^{-2},
 \qquad 0\leq\lambda\leq0.2.
 \label{eq:primitive}
\end{equation}
Multiplication by $z^2$ gives a quadratic polynomial. For $\lambda<1/2$, one root crosses the unit circle at $h_{\mathrm c}=1-\lambda$, while the second remains inside; hence $|\Delta\omega|=1$. The support has greatest common divisor one, so no residue-class decomposition produces TFI copies. Moreover, the bulk densities on the two sides are unequal, making the independent subtraction in Eq.~\eqref{eq:response} indispensable. The weighted-path theorem, a trimerized nondecimated control, and the selector failure of Eq.~\eqref{eq:primitive} are detailed in the Supplemental Material~\cite{SM}.

To compare the same infrared mass on the two sides, we set $q_{\rm top}=e^{m}$ and $q_{\rm triv}=e^{-m}$, and choose $h_a=q_a^{-1}-\lambda q_a^{-2}$ for $a\in\{\mathrm{top},\mathrm{triv}\}$. The second root stays inside the unit circle throughout the tested interval. This parameterization fixes the Majorana localization length while allowing the microscopic velocity, bulk density, and irrelevant couplings to vary with $\lambda$. It therefore probes the universality of the boundary response more stringently than moving along a self-dual TFI line.

We additionally perturb only the first and last matrix elements. One test changes the two end onsite terms by unequal amounts; a second changes the first and last active bonds. These deformations leave the thermodynamic symbol, bulk gap, and winding change intact but alter the local boundary scattering problem. A genuinely topological contribution should survive them after smooth boundary terms are removed, whereas a termination-specific Pfaffian constant need not. The same boundary-sector analysis tests the signed-Pfaffian correction for both perturbations; stronger far-from-critical bond perturbations, for which the present analysis is less constraining, are excluded.

At representative massive points and at criticality, an independent-end analysis finds no evidence of an additional edge mode or boundary-level crossing for any of the nine left--right termination pairs; the logarithmic coefficient of $\ln\cS_L$ remains $-1/8$. The Pfaffian mixed-end residual decays exponentially with $L/\xi$, with no absolute-minor correction detected at the available resolution; see the Supplemental Material~\cite{SM}.

Small-$L$ enumeration reveals isolated sign-violating minors in the primitive family. A boundary-pattern decomposition resolves detected logarithmic contributions of order $10^{-23}$ but is not an all-sector upper bound. We therefore also evaluate the exact absolute-minor difference directly, using a positive doubled-Slater bridge at representative points of the strongest deformation. These sign-free estimates agree with the signed-Pfaffian response within a few $10^{-3}$, conservatively $10^{-2}$ after mixing calibration, with no detected growth with $L$~\cite{SM}. The large-$L$ Pfaffian data are fitted at nine masses, $0.00625\leq m\leq0.1$, and seven scaled sizes, $Lm=8,10,\ldots,20$, reaching $L=3200$. A global fit uses the leading overlap correction $e^{-mL}$ together with the near-critical terms $m\ln m$, $m$, $m^2\ln m$, and $m^2$. Across the five bulk deformations and both boundary perturbations, the Pfaffian-assisted endpoints differ from $\ln2$ by at most $7.4\times10^{-6}$; window variation, alternative truncations, and holdout tests give a $2\times10^{-5}$ systematic envelope for this extrapolation. The exact $d$-fold multiplication above is analytic, whereas the primitive one-channel limit is numerical: the $2\times10^{-5}$ envelope applies only to the signed-Pfaffian extrapolation, while direct control of the exact absolute-minor observable remains at the $10^{-3}$--$10^{-2}$ level and excludes an order-unity selector artifact; the logarithmic coefficient remains $-1/8$ throughout, as required for the unchanged $c=1/2$ endpoint~\cite{SM}. The resulting Pfaffian-assisted finite-mass responses and global fits
for the primitive bulk and boundary deformations are shown in
Fig.~\ref{fig:robust}.
\begin{figure*}[t]
 \includegraphics[width=\textwidth]{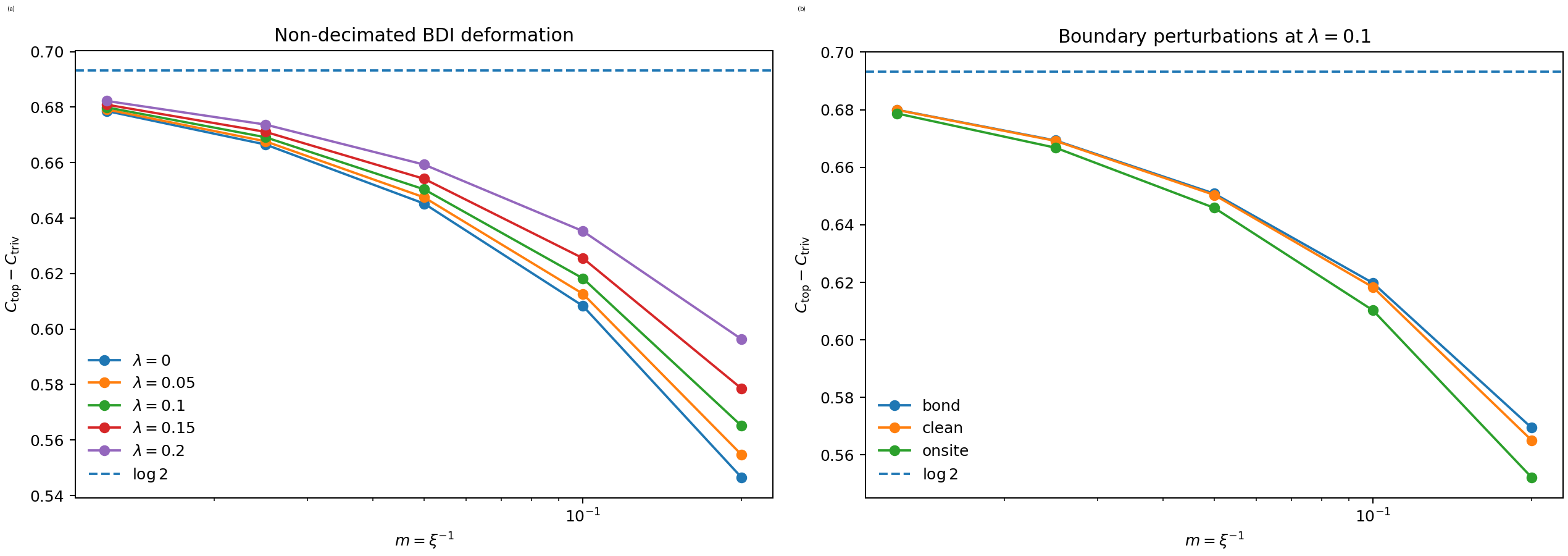}
 \caption{Robustness beyond exact decimation. The independently bulk-subtracted reduced response $\cR_{1/2}(m)$ is shown versus mass. (a) Primitive bulk deformations $f_\lambda$. (b) At $\lambda=0.1$, the unperturbed termination (``clean'') is compared with local perturbations of the end onsite terms (``onsite'') and end bonds (``bond''). Symbols and solid curves are the Pfaffian-assisted finite-mass responses and global fits; the dashed line marks $\ln2$. Direct positive-weight checks of the exact absolute-minor difference are reported in the Supplemental Material.}
 \label{fig:robust}
\end{figure*}
Taken together, the primitive bulk deformation, unequal two-sided bulk densities, local termination changes, the boundary-sector resolution, and the sign-free absolute-minor check exclude decimation, self-duality, a special boundary choice, and an order-unity selector artifact as explanations of the common limit. The remaining discrete input is the number of Majorana channels transferred across the mass inversion.

The large-$L$ analysis must also respect the topological edge singular value. Deep in the ordered phase it is exponentially smaller than machine precision, so an unconstrained singular-value decomposition can choose the zero-mode orientation randomly and introduce a false jump in the Pfaffian. We enforce the finite-chain orientation fixed by $\operatorname{sgn}\det Z$ and verify stability against increased precision. The response is then obtained from a single overdetermined fit to all raw $(L,m)$ values, rather than from two successive nearly saturated extrapolations.

Our quoted uncertainty is systematic rather than a formal least-squares error. Varying the mass and $Lm$ windows, finite-size ansatz, and near-critical truncation, together with holdout tests, gives the conservative envelope $|\cR_{1/2}^{\rm Pf}-\ln2|<2\times10^{-5}$ for the signed-Pfaffian extrapolation of every bulk and boundary deformation; the independent control of the exact absolute-minor observable remains at the $10^{-3}$--$10^{-2}$ level.

\paragraph{Discussion.---}
Equation~\eqref{eq:main-result} identifies a symmetry-protected reduced boundary response, not a topological invariant of the unprocessed total SRE\@. The extensive term is microscopic, and an individual boundary constant $C_a(m)$ can be shifted continuously by changing the termination, adding a decoupled boundary qubit, or selecting another state inside an exact zero-mode manifold; moreover, generic BDI matrices need not lie in a one-Pfaffian sign chamber. Equation~\eqref{eq:response} instead compares two phases using the same microscopic boundary prescription and then approaches the common critical theory. Smooth local contributions cancel in this limiting comparison, while the change in the Majorana boundary sector does not. In this sense the construction resembles other subtractions used to expose universal boundary or topological information, without identifying the full participation entropy with an invariant.

The appearance of $\ln2$ is also consistent with boundary-state counting. A topology-changing BDI channel transfers one real zero mode to each end. The two Majoranas combine into a nonlocal complex fermion with a two-dimensional occupation space, producing a relative factor two. The independent-end analysis is consistent with resolving this factor into effective contributions $\sqrt{2}$ from the two topology-changing Majorana ends, although the present calculation does not identify them with Affleck--Ludwig boundary $g$ factors~\cite{AffleckLudwig1991}.  Our calculation is not a thermodynamic entropy calculation, but the same binary boundary structure is encoded in the absolute-minor partition function. The exact $d$-fold multiplication and the primitive robustness tests are what elevate this interpretation beyond a numerical coincidence in the TFI chain.

This construction is therefore compatible with the vanishing topological magic response reported for the  symmetry-broken and paramagnetic Ising phases~\cite{Nehra2025}: that response probes nonlocally distributed bosonic SPT resources after non-Clifford perturbations, whereas Eq.~\eqref{eq:main-result} probes the fermionic BDI boundary index carried by Jordan--Wigner Majoranas.

The result also extends the BCFT picture of Refs.~\cite{HoshinoPRX2026,HoshinoPRL2026}. At criticality, open ends and conformal defect sectors control logarithmic and constant pieces. A trimerized weighted-path control gives the same one-channel limit within an exact coherent-Pfaffian class, while the range-two deformation is checked directly with positive weights after that coherence fails~\cite{SM}. Our Pfaffian follows the same observable away from the fixed point and finds that a mass inversion leaves a quantized difference between the two boundary free energies. Within the exact family, the BDI data enter separately: $c$ fixes the known critical logarithm, $\omega_{\mathrm c}=r$ labels modes common to both sides, and $|\Delta\omega|=d$ multiplies the massive response. Thus topology is not carried by the dominant volume law, but by the universal boundary remnant that survives after it is removed: in the families studied here, each Majorana channel transferred across the transition leaves a $\ln 2$ imprint on the reduced stabilizer-Rényi response. By Eq.~\eqref{eq:chiral-partner}, the same result gives a computational-basis Shannon--R\'enyi boundary difference $2|\Delta\omega|\ln2$ in the associated half-filled SSH-type chain. The exact Shannon realization also constrains the R\'enyi-index dependence: its largest computational-basis probability is $2^{-L}$, so $M_\infty=L\ln2$ and the full two-sided boundary difference vanishes at $\alpha=\infty$. The critical TFI/SSH participation problem has a boundary R\'enyi transition at $\alpha=4$, but the strict nested-limit massive response at general $\alpha$ remains open; the normalization and exact min-entropy constraint are summarized in the Supplemental Material~\cite{SM}. For models admitting a coherent Pfaffian representation, it would be interesting to determine whether this relation can be derived analytically from the matrix-valued Toeplitz--Hankel structure of the finite-size Pfaffian and the change of factorization index across the transition.

{\it Acknowledgements:}
M.A.R. acknowledges partial support from CNPq and FAPERJ (grant number E-26/210.062/2023). M.A.R. thanks the Abdus Salam International Centre for Theoretical Physics (ICTP) for its hospitality during the completion of this work.

\makeatletter
\let\arxivsavedbibsection\bibsection
\renewcommand{\bibsection}{%
  \begingroup
  \let\addcontentsline\@gobblethree
  \section*{\refname}%
  \endgroup
  \@nobreaktrue}
\makeatother

\makeatletter
\let\bibsection\arxivsavedbibsection
\makeatother

\clearpage
\onecolumngrid
\allowdisplaybreaks[2]
\hypersetup{citecolor=blue,linkcolor=black,urlcolor=blue}

\setcounter{page}{1}
\renewcommand{\thepage}{S\arabic{page}}
\setcounter{section}{0}
\setcounter{subsection}{0}
\setcounter{subsubsection}{0}
\setcounter{equation}{0}
\setcounter{figure}{0}
\setcounter{table}{0}
\setcounter{theorem}{0}
\setcounter{secnumdepth}{3}
\setcounter{tocdepth}{2}
\renewcommand{\thesection}{S\arabic{section}}
\renewcommand{\thesubsection}{\thesection.\arabic{subsection}}
\renewcommand{\thesubsubsection}{\thesubsection.\arabic{subsubsection}}
\makeatletter
\renewcommand{\p@subsection}{}
\renewcommand{\p@subsubsection}{}
\makeatother
\counterwithin{equation}{section}
\counterwithin{figure}{section}
\counterwithin{table}{section}
\renewcommand{\theequation}{\thesection.\arabic{equation}}
\renewcommand{\thefigure}{\thesection.\arabic{figure}}
\renewcommand{\thetable}{\thesection.\arabic{table}}

\begin{center}
{\large\bfseries Supplemental Material for ``Quantized Stabilizer-R\'enyi Boundary Response across Fermionic SPT Transitions''\par}
\vspace{0.9em}
{M.~A.~Rajabpour$^{1}$\par}
\vspace{0.25em}
{\small\itshape $^{1}$Instituto de F\'isica, Universidade Federal Fluminense, Av.~Gal.~Milton Tavares de Souza s/n,\\
Gragoat\'a, 24210-346, Niter\'oi, RJ, Brazil\par}
\vspace{0.3em}
{\small (Dated: \today)\par}
\end{center}

\begin{center}
\begin{minipage}{0.80\textwidth}
\small
We provide the technical foundations and numerical details supporting the results of the Letter. We derive the finite-open-chain covariance matrix for quadratic BDI models, establish the determinant-sum representation of the stabilizer Rényi entropy and its exact Shannon--Rényi realization in the associated number-conserving chiral chain, derive the exact min-entropy constraint on its Rényi-index dependence, and state the precise conditions under which the absolute-minor sum reduces to a single Pfaffian. Exact finite-size reductions are proved for shifted and decimated chains, and a weighted-path theorem gives an exact coherent-Pfaffian class containing periodically modulated nondecimated chains. For the primitive range-two deformation, where exact minor-sign coherence fails, we separate the signed Pfaffian from the absolute-minor sum, resolve boundary-localized noncoherent sectors, and evaluate the complete finite-size absolute-minor difference independently through a positive doubled-Slater bridge at representative parameter points. We also document the topology, boundary perturbations, trimerized control, independent-end analysis, large-system evaluation, and global extrapolation procedure used to extract the boundary response. Finally, complete small-size absolute-minor enumeration for the generic anisotropic $XY$ chain tests both the critical logarithm and the two-sided crossover outside the fixed-selector Pfaffian class.
\end{minipage}
\end{center}
\vspace{0.5em}
\tableofcontents
\vspace{1em}

\noindent
This Supplemental Material is organized to separate two logically different structures used in the Letter.  First, an exact residue-chain reduction may factorize determinant sums at every R\'enyi index.  Second, at index $\alpha=1/2$, an exact single-Pfaffian representation additionally requires a coherent sign pattern for all minors.  We derive the finite-open-chain covariance matrix; prove the exact stabilizer--Shannon correspondence, identify the open TFI partner as the SSH chain, and derive its exact min-entropy constraint; list the model classes and their status; prove the shifted-binomial reduction and a weighted-path coherent-sign theorem; and document the topology and numerical analysis.  For the primitive range-two deformation, for which exact sign coherence fails, the boundary-pattern construction resolves only detected noncoherent sectors and is not used as an upper bound.  The complete absolute-minor difference is checked independently at representative points by a positive doubled-Slater bridge.  A trimerized weighted-path chain supplies a nondecimated exact-Pfaffian control, and complete small-size enumeration treats the generic anisotropic $XY$ chain, where no generic fixed selector is known.  General stabilizer--minor correspondences, reductions in periodic and infinite-system geometries, and the finite-temperature critical Pfaffian are taken from Refs.~\cite{RamirezRajabpour2025,KhassehRamirezRajabpour2026,KhassehRajabpourFiniteT2026}.

\ssection{sec:conventions}{Finite open BDI chains: conventions, covariance matrix, and examples}

\ssubsection{sec:majorana}{Majorana Hamiltonian and Laurent symbol}

Let $\gamma_n$ and $\widetilde\gamma_n$, $n=1,\ldots,L$, be two Majorana species,
\begin{equation}
 \gamma_n^\dagger=\gamma_n,
 \qquad
 \widetilde\gamma_n^\dagger=\widetilde\gamma_n,
\end{equation}
\begin{equation}
 \{\gamma_m,\gamma_n\}=\{\widetilde\gamma_m,\widetilde\gamma_n\}=2\delta_{mn},
 \qquad
 \{\gamma_m,\widetilde\gamma_n\}=0.
 \label{eq:maj-algebra}
\end{equation}
A finite-range translation-invariant BDI chain is specified by real couplings $t_a$ and the Laurent polynomial
\begin{equation}
 f(z)=\sum_{a\in\mathbb Z}t_a z^a,
 \qquad |\supp(f)|<\infty.
 \label{eq:symbol}
\end{equation}
On the interval $1\le n\le L$, open boundary conditions mean that every term whose endpoint leaves the interval is omitted:
\begin{equation}
 H_{\OBC}
 =\frac{i}{2}\sum_{a\in\mathbb Z}t_a
 \sum_{\substack{n=1\\1\le n+a\le L}}^L
 \widetilde\gamma_n\gamma_{n+a}
 =\frac{i}{2}\widetilde\gamma^{\mathsf T}Z^{(L)}\gamma,
 \label{eq:H-obc}
\end{equation}
where
\begin{equation}
 Z^{(L)}_{mn}=t_{n-m}.
 \label{eq:Toeplitz-Z}
\end{equation}
We use the upper open shift
\begin{equation}
 (S_L)_{mn}=\delta_{n,m+1},
 \qquad S_L^L=0,
 \label{eq:shift}
\end{equation}
so that
\begin{equation}
 Z^{(L)}
 =\sum_{a\ge0}t_aS_L^a
 +\sum_{a>0}t_{-a}(S_L^{\mathsf T})^a.
 \label{eq:Z-shift}
\end{equation}

With Pauli operators $\sigma_n^x,\sigma_n^y,\sigma_n^z$ and the Jordan--Wigner convention
\begin{equation}
 \sigma_n^z=i\widetilde\gamma_n\gamma_n,
\end{equation}
the spin Hamiltonian is
\begin{align}
 H_{\rm spin}^{\OBC}
 =&\ \frac{t_0}{2}\sum_{n=1}^{L} \sigma_n^z
 -\sum_{a>0}\frac{t_a}{2}\sum_{n=1}^{L-a}
 \sigma_n^x\!\left(\prod_{j=n+1}^{n+a-1}\sigma_j^z\right)\!\sigma_{n+a}^x
 \nonumber\\
 &-\sum_{a<0}\frac{t_a}{2}\sum_{n=1}^{L-|a|}
 \sigma_n^y\!\left(\prod_{j=n+1}^{n+|a|-1}\sigma_j^z\right)\!\sigma_{n+|a|}^y.
 \label{eq:spin-general}
\end{align}
A transpose of $Z$, an overall sign, or independent signed permutations of its rows and columns leave every absolute-minor sum used below unchanged.  We nevertheless keep one convention throughout each derivation.

\ssubsection{sec:G-derivation}{Derivation of the ground-state matrix \texorpdfstring{$G$}{G}}

Order the Majoranas as
\begin{equation}
 \chi=(\gamma_1,\ldots,\gamma_L,
 \widetilde\gamma_1,\ldots,\widetilde\gamma_L)^{\mathsf T}.
\end{equation}
Equation~\eqref{eq:H-obc} becomes
\begin{equation}
 H_{\OBC}=\frac{i}{4}\chi^{\mathsf T}A\chi,
 \qquad
 A=\begin{pmatrix}
 0&-Z^{\mathsf T}\\
 Z&0
 \end{pmatrix},
 \qquad A^{\mathsf T}=-A.
 \label{eq:A-block}
\end{equation}
We define the covariance matrix by
\begin{equation}
 \Gamma_{ab}=\frac{i}{2}\langle[\chi_a,\chi_b]\rangle
 =\begin{pmatrix}
 0&G^{\mathsf T}\\
 -G&0
 \end{pmatrix}.
 \label{eq:Gamma-def}
\end{equation}
For a gapped quadratic Hamiltonian, the ground-state covariance is the spectral flattening
\begin{equation}
 \Gamma=-A(-A^2)^{-1/2}.
 \label{eq:flattening}
\end{equation}
Since
\begin{equation}
 -A^2=\begin{pmatrix}
 Z^{\mathsf T}Z&0\\
 0&ZZ^{\mathsf T}
 \end{pmatrix},
\end{equation}
comparison of Eqs.~\eqref{eq:Gamma-def} and \eqref{eq:flattening} gives
\begin{equation}
 G=Z(Z^{\mathsf T}Z)^{-1/2}
 =(ZZ^{\mathsf T})^{-1/2}Z
 \equiv\polar(Z).
 \label{eq:G-polar}
\end{equation}
For invertible $Z$, $G$ is orthogonal:
\begin{equation}
 G^{\mathsf T}G=GG^{\mathsf T}=I_L.
 \label{eq:G-orthogonal}
\end{equation}
Equivalently, if
\begin{equation}
 Z=U\Sigma V^{\mathsf T}
 \label{eq:SVD}
\end{equation}
is a singular-value decomposition with $\Sigma>0$, then
\begin{equation}
 G=UV^{\mathsf T}.
 \label{eq:G-SVD}
\end{equation}

If $Z$ has rank $\rho<L$, write
\begin{equation}
 Z=U\begin{pmatrix}\Sigma_\rho&0\\0&0\end{pmatrix}V^{\mathsf T}.
\end{equation}
The partial polar factor fixes the nonzero-energy subspace.  A pure Gaussian ground state additionally requires an orthogonal map $Q\in O(L-\rho)$ between the left and right zero-mode spaces:
\begin{equation}
 G_Q=U\begin{pmatrix}I_\rho&0\\0&Q\end{pmatrix}V^{\mathsf T}.
 \label{eq:G-zero-completion}
\end{equation}
All exact formulas for singular shifted chains below refer to canonical occupation eigenstates, for which $Q$ is a signed permutation matrix.  A generic coherent rotation in a degenerate zero-mode manifold can have a different stabilizer entropy.

\ssubsection{sec:examples}{Explicit models and their \texorpdfstring{$G$}{G} matrices}

The following examples all use Eq.~\eqref{eq:G-polar}; the point is to make explicit which finite matrix is polarized.

\paragraph{Transverse-field Ising/Kitaev chain.}
For
\begin{equation}
 f_{\TFI}(z)=h+Jz,
 \qquad
 Z_{\TFI}^{(L)}=hI_L+JS_L,
 \label{eq:TFI-Z}
\end{equation}
the spin Hamiltonian is
\begin{equation}
 H_{\TFI}
 =\frac{h}{2}\sum_{n=1}^{L}\sigma_n^z
 -\frac{J}{2}\sum_{n=1}^{L-1}\sigma_n^x\sigma_{n+1}^x,
\end{equation}
and
\begin{equation}
 G_{\TFI}^{(L)}(h/J)
 =(hI_L+JS_L)
 \left[(hI_L+JS_L)^{\mathsf T}(hI_L+JS_L)\right]^{-1/2}.
 \label{eq:TFI-G}
\end{equation}
The gauge-equivalent choice $hI_L-JS_L$ is used in the asymptotic formulas of Sec.~\ref{sec:tfi-asymptotics}.

\paragraph{$d$-cluster chain in a field.}
For
\begin{equation}
 f_d(z)=h+Kz^d,
 \qquad
 Z_d^{(L)}=hI_L+KS_L^d,
 \label{eq:cluster-Z}
\end{equation}
the spin interaction is $\sigma_n^x\sigma_{n+1}^z\cdots\sigma_{n+d-1}^z\sigma_{n+d}^x$ and
\begin{equation}
 G_d^{(L)}
 =(hI_L+KS_L^d)
 \left[(hI_L+KS_L^d)^{\mathsf T}(hI_L+KS_L^d)\right]^{-1/2}.
 \label{eq:cluster-G}
\end{equation}
This model admits an exact residue-chain decomposition for every R\'enyi index.

\paragraph{Shifted binomial and topological critical chains.}
For
\begin{equation}
 f_{r,d}(z;h)=z^r(h+z^d),
 \qquad
 Z_{r,d}^{(L)}=S_L^r(hI_L+S_L^d),
 \label{eq:shifted-Z}
\end{equation}
$Z_{r,d}^{(L)}$ has $r$ exact left and $r$ exact right zero modes.  Its pure-state matrix is Eq.~\eqref{eq:G-zero-completion}.  The model $f(z)=z(1+z)$ is the special topological critical case $(r,d,h)=(1,1,1)$.

\paragraph{Anisotropic $XY$ chain.}
For
\begin{equation}
 f_{XY}(z)=h+J_xz+J_yz^{-1},
 \qquad
 Z_{XY}^{(L)}=hI_L+J_xS_L+J_yS_L^{\mathsf T},
 \label{eq:XY-Z}
\end{equation}
\begin{equation}
 G_{XY}^{(L)}=\polar(Z_{XY}^{(L)}).
 \label{eq:XY-G}
\end{equation}
This is a generic BDI Gaussian model, but for $J_xJ_y\ne0$ it is not in the fixed-selector Pfaffian class used in the Letter.

\paragraph{Primitive range-two deformation.}
The non-decimated robustness test uses
\begin{equation}
 f_\lambda(z;h)=h-z^{-1}+\lambda z^{-2},
 \qquad
 Z_\lambda^{(L)}=hI_L-S_L^{\mathsf T}
 +\lambda(S_L^{\mathsf T})^2,
 \label{eq:primitive-Z}
\end{equation}
with
\begin{equation}
 G_\lambda^{(L)}=\polar(Z_\lambda^{(L)}).
 \label{eq:primitive-G}
\end{equation}
It has no residue-chain factorization for $\lambda\ne0$.  Its fixed-selector Pfaffian is not an exact absolute-minor sum at generic finite size; the corresponding logarithmic correction is defined in Sec.~\ref{sec:pf-eval} and evaluated directly in Sec.~\ref{sec:selector-check}.

\ssubsection{sec:model-status}{Which models factorize and which admit the Pfaffian}

The following list is the classification used throughout this work.  An ``all-$\alpha$'' statement is an exact determinant-sum identity, whereas a ``single Pfaffian'' statement concerns only the absolute-minor sum at $\alpha=1/2$.

\begin{enumerate}[leftmargin=2.2em,itemsep=0.35em]
\item \emph{TFI/Kitaev, $f(z)=h+z$.}  This is the primitive building block.  For $h>0$, the fixed selector $J_{2L}$ is exact for the path-oriented representative in Eq.~\eqref{eq:TFI-oriented-representative}, or equivalently a fixed pulled-back selector is exact in the original upper-shift convention.
\item \emph{Connected weighted paths.}  Any invertible bidiagonal $Z$ whose nonzero entries can be gauged to one hopping sign has an exact orientation-adapted coherent selector at $\alpha=1/2$.  Period-two modulation is the minimal nonuniform example; Sec.~\ref{sec:trimerized-control} gives a period-three nondecimated numerical control.
\item \emph{$d$-cluster family, $f(z)=h+z^d$.}  The determinant sums reduce exactly to $d$ TFI residue chains for every $\alpha$; at $\alpha=1/2$ this becomes a direct sum of TFI Pfaffians.
\item \emph{Shifted binomial, $f(z)=z^r(h+z^d)$.}  In canonical zero-mode sectors there is an exact active-block reduction plus $r$ zero-mode blocks; a single block Pfaffian follows at $\alpha=1/2$.
\item \emph{Topological critical chain, $f(z)=z(1+z)$.}  This is a special shifted binomial, and the zero-mode sector must be fixed.
\item \emph{Critical chiral family, $f(z)=z^m+z^{-m}$.}  For compatible sizes, the known equal-amplitude critical reduction applies~\cite{RamirezRajabpour2025}; the Pfaffian follows through that reduction.
\item \emph{Primitive test, $f(z)=h-z^{-1}+\lambda z^{-2}$.}  There is no decimation.  The fixed selector gives a signed-minor Pfaffian rather than an exact absolute-minor sum.  Exact small-size enumeration and the boundary-pattern decomposition resolve detected noncoherent sectors, while Sec.~\ref{sec:absolute-bridge} independently evaluates the complete absolute-minor difference at representative large sizes through a positive-weight bridge.
\item \emph{Generic anisotropic $XY$ chain.}  No arbitrary-index reduction and no generic fixed selector are established when both directional couplings are nonzero.  The Ising endpoints are exceptions.  Nevertheless, Sec.~\ref{sec:xy-exact-check} performs complete small-size enumeration of the exact absolute-minor sum as an independent finite-size test.
\end{enumerate}

\ssection{sec:pfaffian}{Absolute minors and the single-Pfaffian formula}

\ssubsection{sec:entropy-convention}{Determinant sums and entropy convention}

For a real $L\times L$ matrix $G$, define
\begin{equation}
 \mathcal D_\beta(G)
 =\sum_{k=0}^{L}\ 
 \sum_{\substack{I,J\subseteq\{1,\ldots,L\}\\|I|=|J|=k}}
 |\det G[I,J]|^\beta.
 \label{eq:D-beta}
\end{equation}
For the Gaussian states considered here, the Pauli-spectrum R\'enyi entropy in the convention of the Letter is~\cite{RamirezRajabpour2025}
\begin{equation}
 M_\alpha(G)
 =\frac{1}{1-\alpha}
 \ln\!\left[
 \frac{\mathcal D_{2\alpha}(G)}{\mathcal D_2(G)^\alpha}
 \right].
 \label{eq:M-alpha}
\end{equation}
For orthogonal $G$, Cauchy--Binet gives
\begin{equation}
 \mathcal D_2(G)=2^L.
 \label{eq:D2}
\end{equation}
At $\alpha=1/2$,
\begin{equation}
 \cS_L(G):=\mathcal D_1(G),
 \qquad
 M_{1/2}=2\ln\cS_L-L\ln2.
 \label{eq:M-half}
\end{equation}
The resource-normalized quantity is $\widetilde M_\alpha=M_\alpha-L\ln2$.  This subtracts the stabilizer baseline but does not change any two-sided difference at fixed $L$.

The sums $\mathcal D_\beta$ satisfy three elementary identities used repeatedly:
\begin{align}
 \mathcal D_\beta(PGQ)&=\mathcal D_\beta(G),
 \label{eq:D-invariance}\\
 \mathcal D_\beta(G_1\oplus G_2)&=
 \mathcal D_\beta(G_1)\mathcal D_\beta(G_2),
 \label{eq:D-direct-sum}\\
 \mathcal D_\beta(I_r)&=2^r,
 \label{eq:D-identity}
\end{align}
where $P$ and $Q$ are signed permutation matrices.  Equation~\eqref{eq:D-direct-sum} follows because a nonzero minor of a block diagonal matrix selects equal row and column cardinalities inside each block.

\ssubsection{sec:shannon-partner}{Number-conserving chiral partner and the Shannon--R\'enyi entropy}

The determinant sums have an exact interpretation as participation
probabilities of a number-conserving Gaussian state.  Let $Z$ be
invertible and write
\begin{equation}
 Z=U\Sigma V^{\mathsf T},
 \qquad
 G=UV^{\mathsf T}.
 \label{eq:shannon-svd}
\end{equation}
Introduce two sets of complex fermions, $a_i$ and $b_j$, and the
half-filled chiral Hamiltonian
\begin{equation}
 \widehat H_{\rm ch}
 =\sum_{i,j=1}^{L}
 \left(a_i^\dagger Z_{ij}b_j+b_j^\dagger Z_{ij}a_i\right),
 \qquad
 \mathbb H_Z=
 \begin{pmatrix}
  0&Z\\
  Z^{\mathsf T}&0
 \end{pmatrix}.
 \label{eq:chiral-H}
\end{equation}
The negative-energy one-particle orbitals are
\begin{equation}
 w_\mu=\frac{1}{\sqrt2}
 \begin{pmatrix}u_\mu\\-v_\mu\end{pmatrix},
 \qquad E_\mu=-\sigma_\mu,
 \label{eq:negative-orbitals}
\end{equation}
where $u_\mu$ and $v_\mu$ are the columns of $U$ and $V$.  A unitary
rotation inside the occupied subspace changes the Slater determinant
only by an overall phase.  Rotating the occupied orbitals by
$U^{\mathsf T}$ gives
\begin{equation}
 |\Psi_Z\rangle
 =2^{-L/2}\prod_{i=1}^{L}
 \left(a_i^\dagger-\sum_{j=1}^{L}G_{ij}b_j^\dagger\right)|0\rangle,
 \label{eq:slater-G}
\end{equation}
up to an overall phase.

Use the reference configuration with all $a$ sites occupied.  A
half-filled configuration is specified by removing particles from a
set $I$ of $a$ sites and occupying a set $J$ of $b$ sites, with
$|I|=|J|$.  Expansion of Eq.~\eqref{eq:slater-G} gives
\begin{equation}
 \langle I,J|\Psi_Z\rangle
 =\pm 2^{-L/2}\det G[I,J],
 \qquad
 p_Z(I,J)=2^{-L}|\det G[I,J]|^2.
 \label{eq:shannon-probabilities}
\end{equation}
The sign depends only on the ordering convention and drops out of the
probability.  Cauchy--Binet and the orthogonality of $G$ give
$\sum_{I,J}p_Z(I,J)=1$.  Consequently, the computational-basis
Shannon--R\'enyi entropy of the half-filled chiral state is
\begin{align}
 H_\alpha^{\rm ch}(Z)
 &=\frac{1}{1-\alpha}
 \ln\sum_{I,J}p_Z(I,J)^\alpha
 \nonumber\\
 &=\frac{1}{1-\alpha}
 \ln\left[2^{-\alpha L}\mathcal D_{2\alpha}(G)\right]
 =M_\alpha(G),
 \label{eq:stabilizer-shannon}
\end{align}
where the last equality uses $\mathcal D_2(G)=2^L$.

\begin{lemma}[Exact min-entropy constraint]
For every orthogonal $G$ and every finite $L$, the largest probability
of the doubled Slater determinant is
\begin{equation}
 \max_X p_Z(X)=2^{-L}.
 \label{eq:pmax}
\end{equation}
Consequently,
\begin{equation}
 H_\infty^{\rm ch}(Z)=M_\infty(G)=L\ln2.
 \label{eq:min-entropy}
\end{equation}
\end{lemma}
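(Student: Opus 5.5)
The plan is to show two facts: every minor of an orthogonal matrix satisfies $|\det G[I,J]|\le 1$, and the bound is attained by at least one pair $(I,J)$ with $|I|=|J|$. Given these, the expression $p_Z(I,J)=2^{-L}|\det G[I,J]|^2$ from Eq.~\eqref{eq:shannon-probabilities} gives $\max_X p_Z(X)=2^{-L}$ directly. The min-entropy is then $H_\infty^{\rm ch}=-\ln\max_X p_Z(X)=L\ln2$. Finally, the identity $H_\alpha^{\rm ch}(Z)=M_\alpha(G)$ of Eq.~\eqref{eq:stabilizer-shannon} holds for every $\alpha$, so it survives the limit $\alpha\to\infty$ (all quantities are finite sums) and transfers the result to $M_\infty(G)$.

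\emph{Upper bound.} Let $k=|I|=|J|$. The rows of $G$ indexed by $I$ form a $k\times L$ matrix $G[I,:]$ with orthonormal rows, since $GG^{\mathsf T}=I_L$ by Eq.~\eqref{eq:G-orthogonal}. Cauchy--Binet then gives
\begin{equation}
 \sum_{|J|=k}\det G[I,J]^2=\det\!\left(G[I,:]G[I,:]^{\mathsf T}\right)=\det I_k=1.
\end{equation}
Each term in this sum is nonnegative, so each satisfies $\det G[I,J]^2\le1$. An alternative route is Hadamard's inequality: the rows of $G[I,J]$ are truncations of unit vectors, so each has norm at most one. The case $k=0$ (the empty minor, equal to $1$) is included.

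\emph{Attainment.} The empty pair $I=J=\emptyset$ has determinant $1$ by convention. Likewise the full pair $I=J=\{1,\dots,L\}$ has $|\det G|=1$ because $G$ is orthogonal. Either configuration attains $p_Z=2^{-L}$, which is exactly the probability of the reference configuration in the Slater expansion of Eq.~\eqref{eq:slater-G}, up to the sign convention. So the maximum equals $2^{-L}$, and Eq.~\eqref{eq:pmax} follows.

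\emph{Expected obstacle.} The only subtle point is the interpretation when $Z$ is singular. In that case $G$ means the zero-mode completion $G_Q$ of Eq.~\eqref{eq:G-zero-completion}. I would note that $G_Q$ is still orthogonal, so the argument above applies unchanged. The $\alpha\to\infty$ limit is routine: $\frac{1}{1-\alpha}\ln\sum_X p_Z(X)^\alpha\to-\ln\max_X p_Z(X)$ for any finite distribution.
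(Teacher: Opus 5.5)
Your proof is correct, and it differs from the paper's only in where the bound is applied.

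The paper never uses the minor representation. It writes each amplitude as $\det W_G[X,:]$, where $W_G=2^{-1/2}(G^{\mathsf T},I_L)^{\mathsf T}$ is the occupied-orbital isometry. Every row of $W_G$ has norm $2^{-1/2}$, so Hadamard's inequality on the selected $L\times L$ block gives $|\det W_G[X,:]|\le 2^{-L/2}$. Equality holds for the all-upper and all-lower selections.

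You instead use Eq.~\eqref{eq:shannon-probabilities} to factor out $2^{-L}$ first. You then bound every minor of $G$ by one, using Cauchy--Binet on the orthonormal rows $G[I,:]$.

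\emph{What the paper's route buys.} It is slightly more self-contained. It needs no sign or ordering bookkeeping from the Slater expansion, and the factor $2^{-L}$ comes directly from the row norms.

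\emph{What your route buys.} It shows that the bound is an elementary property of orthogonal matrices. For each fixed $I$, the squared minors $\det G[I,J]^2$ form a probability distribution over $J$. Summing over $I$ then recovers $\mathcal D_2(G)=2^L$ as a by-product.

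Your maximizing configurations, $I=J=\emptyset$ and $I=J=\{1,\dots,L\}$, are exactly the paper's all-lower and all-upper selections. Your explicit treatment of the $\alpha\to\infty$ limit and of the zero-mode completion $G_Q$ spells out points the paper leaves implicit.
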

\begin{proof}
The occupied-orbital isometry can be chosen as
\begin{equation}
 W_G=\frac{1}{\sqrt2}
 \begin{pmatrix}G\\ I_L\end{pmatrix},
 \qquad W_G^{\mathsf T}W_G=I_L.
\end{equation}
Every computational-basis amplitude is the determinant of an
$L\times L$ matrix obtained by selecting $L$ rows of $W_G$.  Each row
has Euclidean norm $2^{-1/2}$, so Hadamard's inequality gives
$|\det W_G[X,:]|\leq2^{-L/2}$.  Equality is attained by selecting all
rows from the lower block, for which the amplitude matrix is
$2^{-1/2}I_L$, and also by selecting all rows from the upper block,
for which it is $2^{-1/2}G$.  Squaring proves
Eq.~\eqref{eq:pmax} and hence Eq.~\eqref{eq:min-entropy}.
\end{proof}

At $\alpha=1/2$ this reduces to
\begin{equation}
 H_{1/2}^{\rm ch}
 =2\ln\!\left(2^{-L/2}\cS_L(G)\right)
 =2\ln\cS_L(G)-L\ln2
 =M_{1/2}(G).
 \label{eq:half-shannon}
\end{equation}
If $Z$ is singular, the same construction applies after choosing the
same orthogonal zero-mode completion $G_Q$ as in
Eq.~\eqref{eq:G-zero-completion}; this fixes the corresponding
half-filled state inside the zero-energy manifold.

For the open TFI matrix in Eq.~\eqref{eq:TFI-Z}, Eq.~\eqref{eq:chiral-H}
becomes
\begin{equation}
 \widehat H_{\rm SSH}
 =\sum_{n=1}^{L}h\,a_n^\dagger b_n
 +\sum_{n=1}^{L-1}J\,a_n^\dagger b_{n+1}
 +\mathrm{H.c.}
 \label{eq:SSH-H}
\end{equation}
(up to a staggered gauge).  This is the open SSH chain with intracell
hopping $h$ and intercell hopping $J$; at $h=J$ it is the uniform open
hopping chain, or equivalently the XX chain.  More general finite-range
BDI matrices produce longer-range bipartite SSH-type partners.
Finally, if
\begin{equation}
 \ln\cS_L^a=Ls_a+C_a+o(1),
\end{equation}
then Eq.~\eqref{eq:half-shannon} gives the Shannon--R\'enyi boundary
constant $2C_a$.  The reduced response of the Letter therefore
corresponds to the full occupation-basis boundary difference
\begin{equation}
 \Delta C_{1/2}^{\rm ch}
 =2\cR_{1/2}
 \longrightarrow 2|\Delta\omega|\ln2.
 \label{eq:shannon-response}
\end{equation}

\ssubsection{sec:minor-pf}{Checkerboard embedding and minor-summation identity}

Interleave row and column labels and define the $2L\times2L$ antisymmetric matrix
\begin{equation}
 R(G)_{2m-1,2n}=G_{mn},
 \qquad
 R(G)_{2n,2m-1}=-G_{mn},
 \label{eq:R-G}
\end{equation}
with all odd--odd and even--even entries zero.  Define the selector $J_{2L}$ by
\begin{equation}
 (J_{2L})_{ab}=(-1)^{a+b+1}\quad(a<b),
 \qquad
 (J_{2L})_{ba}=-(J_{2L})_{ab}.
 \label{eq:J-selector}
\end{equation}
For equally sized sets $I,J\subseteq\{1,\ldots,L\}$, let
\begin{equation}
 \epsilon(I,J)=(-1)^{k(k-1)/2+N(I,J)},
 \qquad
 N(I,J)=\#\{(i,j)\in I\times J:j<i\},
 \label{eq:epsilon}
\end{equation}
where $k=|I|=|J|$.

\begin{proposition}[Minor-generating Pfaffian]
For every real matrix $G$ and scalar $u$,
\begin{equation}
 \Pf[J_{2L}+uR(G)]
 =\sum_{k=0}^{L}u^k
 \sum_{|I|=|J|=k}
 \epsilon(I,J)\det G[I,J].
 \label{eq:pf-generating}
\end{equation}
\end{proposition}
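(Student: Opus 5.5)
The plan is to expand the Pfaffian of a sum of two antisymmetric matrices over complementary index sets. I would use the standard expansion (Stembridge / Ishikawa–Wakayama)
\begin{equation*}
 \Pf(A+B)=\sum_{T\subseteq[2L],\,|T|\ \mathrm{even}}(-1)^{\sigma(T)-|T|/2}\,\Pf(B[T])\,\Pf(A[T^c]),
\end{equation*}
where $\sigma(T)=\sum_{t\in T}t$ and $A[T]$ denotes the principal submatrix on $T$. This follows from expanding the Pfaffian as a sum over perfect matchings and splitting each matching into the edges coming from $A$ and those coming from $B$; the sign factor is the sign of the shuffle permutation that places $T$ before $T^c$. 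I take $A=J_{2L}$ and $B=uR(G)$. Homogeneity gives $\Pf(uR(G)[T])=u^{|T|/2}\Pf(R(G)[T])$, which produces the powers $u^k$.

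Next I evaluate $\Pf(R(G)[T])$. Write $T=T_{\rm odd}\cup T_{\rm even}$ with $T_{\rm odd}=\{2i-1:i\in I\}$ and $T_{\rm even}=\{2j:j\in J\}$. Since $R(G)$ is bipartite between odd and even labels, its Pfaffian vanishes unless $|I|=|J|=k$. In that case, reordering $T$ so that all odd labels come first turns the matrix into the block form $\begin{pmatrix}0&G[I,J]\\-G[I,J]^{\mathsf T}&0\end{pmatrix}$, whose Pfaffian is $(-1)^{k(k-1)/2}\det G[I,J]$. The reordering itself contributes a sign $(-1)^{\#\{\text{inversions between odd and even labels of }T\}}$, and I will express that inversion count through $N(I,J)$ and the analogous count of pairs with $j\ge i$.

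Then I compute $\Pf(J_{2L}[T^c])$. Since $(-1)^{a+b+1}=-(-1)^a(-1)^b$, one has $J_{2L}=-DJ^{(1)}D$, where $D=\diag((-1)^a)$ and $J^{(1)}$ is the antisymmetric matrix with all entries above the diagonal equal to $+1$. Every principal submatrix of $J^{(1)}$ has the same form, and its Pfaffian equals $1$; this follows by induction from the Pfaffian expansion along the first row, which gives an alternating sum of ones. Hence
\begin{equation*}
 \Pf(J_{2L}[T^c])=(-1)^{|T^c|/2}\prod_{a\in T^c}(-1)^a .
\end{equation*}
Because $|T^c|=2L-2k$, the prefactor becomes $(-1)^{L-k}$. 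The product equals $(-1)^{\#(\text{odd labels in }T^c)}$, which is $(-1)^{L-k}$ again. The two factors cancel, so $\Pf(J_{2L}[T^c])=+1$ for every admissible $T$.

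The final step, which I expect to be the main obstacle, is the sign bookkeeping. I have to combine $(-1)^{\sigma(T)-k}$, the interleaving sign from reordering $T$, and $(-1)^{k(k-1)/2}$, and show the product is exactly $\epsilon(I,J)=(-1)^{k(k-1)/2+N(I,J)}$. The parity of $\sigma(T)$ is $\sum_{i\in I}(2i-1)+\sum_{j\in J}2j\equiv k \pmod 2$, so $(-1)^{\sigma(T)-k}=1$. The interleaving sign counts pairs of an even label $2j$ lying before an odd label $2i-1$, that is, pairs with $j<i$, which is exactly $N(I,J)$.

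The delicate point is fixing the orientation convention consistently throughout, so that no stray $(-1)^k$ survives. This concerns the sign convention inside the subset expansion formula and whether the reordering moves odd labels forward or even labels back. I would settle these conventions by checking the cases $L=1$ and $L=2$ by hand against the right-hand side of Eq.~\eqref{eq:pf-generating} before writing the general argument.
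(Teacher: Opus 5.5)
Your proposal is correct and follows essentially the paper's route: you expand $\Pf[J_{2L}+uR(G)]$ over selected vertex sets, reduce the bipartite $R(G)$ block to $\epsilon(I,J)\det G[I,J]$ by putting odd labels first, and show that the selector block and the shuffle sign together contribute $+1$. The paper phrases that last step as the selector-block sign cancelling the separation sign for any complementary set, whereas you use the balanced odd/even count to show $(-1)^{\sigma(T)-k}=1$ and $\Pf(J_{2L}[T^c])=1$ separately; since the reordering sign is exactly $(-1)^{N(I,J)}$, your bookkeeping already closes with no stray $(-1)^k$, and the $L=1,2$ checks serve only as confirmation.
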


\begin{proof}
In the Pfaffian expansion, every factor chosen from $R(G)$ pairs one odd vertex $2i-1$ with one even vertex $2j$.  A term containing $k$ such factors therefore selects two equally sized sets $(I,J)$.  Reordering the selected vertices into odd labels followed by even labels converts their principal Pfaffian into $\epsilon(I,J)\det G[I,J]$.  The complementary vertices are paired by $J_{2L}$.  Direct induction gives unit Pfaffian for every even complementary selector block, while its shuffle sign cancels the sign generated when selected and complementary vertices are separated.  Summing over $(I,J)$ yields Eq.~\eqref{eq:pf-generating}, which is the specialization of the Pfaffian minor-summation formula required here~\cite{IshikawaWakayama1995}.
\end{proof}

\begin{definition}[Coherent selector chamber]
A connected parameter region is coherent with $J_{2L}$ if
\begin{equation}
 \epsilon(I,J)\det G[I,J]\ge0
 \label{eq:coherent}
\end{equation}
for every pair of equally sized sets $(I,J)$ throughout the region.
\end{definition}
In a coherent chamber, Eq.~\eqref{eq:pf-generating} at $u=1$ gives
\begin{equation}
 \cS_L(G)=\Pf[J_{2L}+R(G)].
 \label{eq:single-pf}
\end{equation}
We use an absolute value when the overall orientation has not been fixed.

\ssubsection{sec:path-sign}{The path class and the domain used in the Letter}

\begin{theorem}[Connected weighted-path coherent signs]
Let $Z$ be an invertible bidiagonal matrix with every diagonal and first off-diagonal entry nonzero.  Choose signed permutation matrices $U_Z,V_Z$ such that
\begin{equation}
 \widehat Z=U_Z ZV_Z^{\mathsf T}
 \label{eq:path-oriented-Z}
\end{equation}
is lower bidiagonal with positive diagonal entries and negative subdiagonal entries.  Then
\begin{equation}
 \mathbb H_{\widehat Z}
 =\begin{pmatrix}0&\widehat Z\\ \widehat Z^{\mathsf T}&0\end{pmatrix}
 \label{eq:doubled-path}
\end{equation}
is a connected open path.  With
\begin{equation}
 D_a=\diag(1,-1,1,-1,\ldots),
 \qquad
 D_b=-D_a,
 \label{eq:path-staggered-gauge}
\end{equation}
the one-particle gauge $\mathcal D=\diag(D_a,D_b)$ makes every hopping of $\mathcal D\mathbb H_{\widehat Z}\mathcal D$ nonpositive in the canonical interleaved path ordering.  Define
\begin{equation}
 \widehat G=\polar(\widehat Z)
 =U_ZGV_Z^{\mathsf T},
 \qquad G=\polar(Z).
 \label{eq:path-oriented-G}
\end{equation}
Then the selector-signed minors of $\widehat G$ have one sign,
\begin{equation}
 \epsilon(I,J)\det \widehat G[I,J]\ge0,
 \label{eq:path-coherence}
\end{equation}
up to one common orientation, and therefore
\begin{equation}
 \cS_L(G)=\cS_L(\widehat G)
 =\left|\Pf[J_{2L}+R(\widehat G)]\right|.
 \label{eq:path-single-pf}
\end{equation}
Equivalently, the same result can be written directly in the original row--column convention as
\begin{equation}
 \cS_L(G)=\left|\Pf[R(G)+J_Z]\right|,
 \qquad
 J_Z=\mathcal Q_Z^{\mathsf T}J_{2L}\mathcal Q_Z,
 \label{eq:path-pulled-selector}
\end{equation}
where the signed permutation $\mathcal Q_Z$ acts by $U_Z$ on the odd interleaved coordinates and by $V_Z$ on the even interleaved coordinates.
\end{theorem}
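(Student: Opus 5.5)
The plan is to get the coherent signs from a Perron--Frobenius argument for the number-conserving chiral partner, rather than from the matrix structure of $\polar(\widehat Z)$ itself. Equation~\eqref{eq:shannon-probabilities} already identifies $\det\widehat G[I,J]$, up to an ordering sign, with a computational-basis amplitude of the half-filled ground state of $\mathbb H_{\widehat Z}$. On a connected open path, free fermions with nearest-neighbour hoppings are, after a Jordan--Wigner map along the path, hard-core bosons with the same hoppings, and no string signs appear. So the strategy has three parts: first, show the gauged path Hamiltonian is stoquastic in the path-ordered occupation basis; second, conclude that all amplitudes have one sign; third, convert the path-ordering sign into the selector sign $\epsilon(I,J)$.

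\emph{Step 1 (reduction and geometry).} By Eq.~\eqref{eq:D-invariance}, $\cS_L(G)=\cS_L(\widehat G)$. Polar factors are covariant under signed permutations, so $\polar(U_ZZV_Z^{\mathsf T})=U_Z\polar(Z)V_Z^{\mathsf T}$, which gives Eq.~\eqref{eq:path-oriented-G}. For lower bidiagonal $\widehat Z$, the nonzero entries of $\mathbb H_{\widehat Z}$ are exactly $a_n\!-\!b_n$ (diagonal) and $a_n\!-\!b_{n-1}$ (subdiagonal). Hence the graph is the path $b_0$-less chain $a_1,b_1,a_2,b_2,\ldots$ in the interleaved ordering: $a_1-b_1-a_2-b_2-\cdots-a_L-b_L$ (with the appropriate convention, $a_n$ couples to $b_n$ and $b_{n-1}$). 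The path is connected because every bidiagonal entry is nonzero. Conjugating by $\mathcal D=\diag(D_a,D_b)$ flips the sign of the bond $a_n b_m$ by $(-1)^{n+m+1}$. For $m=n$ this turns the positive diagonal entries negative. For $m=n-1$ it leaves the negative subdiagonal entries negative. All hoppings are therefore nonpositive, as claimed.

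\emph{Step 2 (positivity).} In the path-ordered fermion basis, a nearest-neighbour hop passes over no other occupied site, so the many-body matrix elements of $\mathcal D\,\widehat H\,\mathcal D$ are all nonpositive. The half-filled sector is connected under hopping, since on a path any configuration of $N$ particles reaches any other. Because $\widehat Z$ is invertible, the half-filled ground state is unique and separated by the gap $\min\sigma_\mu$. Perron--Frobenius then makes every amplitude in this basis strictly positive up to one global phase. Undoing $\mathcal D$ multiplies the amplitude of $(I,J)$ by a product of $\pm1$ factors indexed by the removed $a$ sites and the added $b$ sites. I would then write $\langle I,J|\Psi\rangle$ in three stages: from the path order to the block order used in Eq.~\eqref{eq:slater-G}, together with the $\mathcal D$ signs, and then to $\det\widehat G[I,J]$.

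\emph{Step 3 (sign bookkeeping; main obstacle).} The crux is to show that the total sign equals $\epsilon(I,J)=(-1)^{k(k-1)/2+N(I,J)}$ times a constant independent of $(I,J)$. The reordering from interleaved to block order is a shuffle. Its sign counts the pairs in which an occupied $b_j$ precedes an occupied $a_i$ in path order. The occupied $a$ sites are the complement of $I$, so I would rewrite that count as $N(I,J)$ plus terms depending only on $|I|,|J|$ and on $\sum I,\sum J$. The $\mathcal D$ gauge contributes $(-1)^{\sum I+\sum J}$ up to a constant, and this should cancel the parity terms $\sum I+\sum J$ generated by the shuffle. The annihilation of the $a$'s in $I$ relative to the reference state yields the $k(k-1)/2$ term. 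I would first test the claimed identity on $L=2$ by direct enumeration of all $(I,J)$, and only then do the general parity count by induction on $k$, adding one pair $(i,j)$ at a time.

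\emph{Step 4 (conclusion).} Once Step 3 gives $\epsilon(I,J)\det\widehat G[I,J]\ge0$ for all $(I,J)$ up to a common sign, the coherence condition Eq.~\eqref{eq:coherent} holds. The Minor-generating Pfaffian proposition at $u=1$ then gives Eq.~\eqref{eq:path-single-pf}. Finally, Eq.~\eqref{eq:path-pulled-selector} follows from a covariance argument. Since $R(U_ZGV_Z^{\mathsf T})=\mathcal Q_ZR(G)\mathcal Q_Z^{\mathsf T}$, conjugating $J_{2L}+R(\widehat G)$ by $\mathcal Q_Z$ gives $R(G)+J_Z$, and $\Pf(\mathcal Q^{\mathsf T}A\mathcal Q)=\det\mathcal Q\,\Pf A=\pm\Pf A$.
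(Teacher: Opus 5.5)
Your proposal is correct and follows the paper's proof essentially step for step: the staggered gauge that makes the doubled path stoquastic, Perron--Frobenius on the irreducible $L$-particle sector, Marshall-sign bookkeeping that identifies the gauge-undone amplitudes with $\epsilon(I,J)\det\widehat G[I,J]$, the minor-summation Pfaffian at $u=1$, and Pfaffian congruence for $J_Z$. In Step~3, which the paper also only asserts as ``direct bookkeeping'', your allocation of parities needs fixing: with $\Sigma_I=\sum_{i\in I}i$ and $\Sigma_J=\sum_{j\in J}j$, the interleaved-to-block shuffle gives exactly $(-1)^{kL+\Sigma_J+N(I,J)}$, the gauge gives $(-1)^{\Sigma_I+\Sigma_J+k}$ up to a global constant, and expanding $\prod_i(a_i^\dagger-\sum_j\widehat G_{ij}b_j^\dagger)|0\rangle$ gives $(-1)^{k+kL+\Sigma_I+k(k-1)/2}$, so the $\Sigma_I$ parity and the extra $(-1)^k$ are cancelled by the Slater expansion rather than by the shuffle, and the product of the three factors is exactly $\epsilon(I,J)$.
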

\begin{proof}
Apply the staggered gauge in Eq.~\eqref{eq:path-staggered-gauge}.  In the $L$-particle sector of the canonically ordered $2L$-site path, the resulting nearest-neighbor hopping has nonpositive off-diagonal matrix elements.  Connectivity of the one-particle path makes the fixed-particle configuration graph irreducible.  Perron--Frobenius therefore gives a unique ground-state vector with strictly one-signed amplitudes in the gauged occupation basis.  Undoing the one-particle gauge produces the corresponding Marshall sign in the occupation basis; direct bookkeeping in the row--column convention of Eq.~\eqref{eq:epsilon} shows that this is precisely the configuration sign multiplying the Slater minor.  Filling the $L$ negative-energy orbitals of $\mathbb H_{\widehat Z}$ gives the same state as the doubled Slater determinant of Sec.~\ref{sec:shannon-partner}; its occupation amplitudes are therefore, up to one common orientation, $2^{-L/2}\epsilon(I,J)\det \widehat G[I,J]$.  Hence all selector-signed minors of $\widehat G$ share one sign, and Eq.~\eqref{eq:pf-generating} sums their absolute values.  Signed row and column permutations leave $\cS_L$ invariant, giving the first equality in Eq.~\eqref{eq:path-single-pf}.  Finally,
\begin{equation}
 R(\widehat G)=\mathcal Q_ZR(G)\mathcal Q_Z^{\mathsf T},
\end{equation}
and Pfaffian congruence gives Eq.~\eqref{eq:path-pulled-selector} after taking the absolute value.
\end{proof}

For the upper-shift TFI convention of Eq.~\eqref{eq:TFI-Z}, let $P_L$ reverse the site order and let
\begin{equation}
 D_L=\diag(1,-1,1,-1,\ldots).
\end{equation}
For $J>0$, choosing $U_Z=V_Z=D_LP_L$ gives
\begin{equation}
 \widehat Z_{\TFI}^{(L)}
 =D_LP_L(hI_L+JS_L)P_LD_L
 =hI_L-JS_L^{\mathsf T},
 \qquad
 \widehat G_{\TFI}^{(L)}
 =D_LP_LG_{\TFI}^{(L)}P_LD_L.
 \label{eq:TFI-oriented-representative}
\end{equation}
Thus the open TFI chain belongs to the weighted-path class, but the fixed matrix $J_{2L}$ acts on the path-oriented representative $\widehat G_{\TFI}$; in the original upper-shift convention it is equivalently replaced by the pulled-back selector of Eq.~\eqref{eq:path-pulled-selector}.  For the upper-shift asymptotic convention $Z=hI_L-S_L$, the staggered gauges are unnecessary and one may take $U_Z=V_Z=P_L$.  The lower-bidiagonal convention $Z=hI_L-S_L^{\mathsf T}$ already has $U_Z=V_Z=I_L$.  The theorem also covers arbitrary positive inhomogeneous fields and bonds, including period-two and higher-period modulations and local end changes that keep the path connected.  Exact residue copies and direct sums preserve the same property.

The argument does not extend to the primitive range-two deformation.  With $T=S_L^{\mathsf T}$,
\begin{equation}
 Z_\lambda=hI-T+\lambda T^2
 =h(I-\eta_+T)(I-\eta_-T),
 \qquad
 \eta_\pm=\frac{1\pm\sqrt{1-4\lambda h}}{2h},
 \label{eq:primitive-factorization}
\end{equation}
whenever $1-4\lambda h\ge0$.  After the staggered gauge $D=\diag(1,-1,\ldots)$, both bidiagonal factors in $DZ_\lambda D=h(I+\eta_+T)(I+\eta_-T)$ are totally nonnegative.  This useful property is nevertheless not preserved by the nonlinear polar map.  Moreover, the range-two coupling creates exchange loops in the exterior-power configuration graph, so the path-class Perron--Frobenius proof is obstructed.

The failure can be seen within the retained parameter set.  At $\lambda=0.2$, $m=0.1$, on the topological ($\omega=-1$) side, and $L=7$, the matching prescription in Eq.~\eqref{eq:h-top} gives $h=0.741091267420363\ldots$.  For
\begin{equation}
 I=\{5,6,7\},
 \qquad
 J=\{1,2,3\},
\end{equation}
Eq.~\eqref{eq:epsilon} gives $\epsilon(I,J)=+1$, whereas an $80$-digit evaluation yields
\begin{equation}
 \epsilon(I,J)\det G_\lambda^{(7)}[I,J]
 =-3.312872556893435\ldots\times10^{-10}.
 \label{eq:primitive-negative-minor}
\end{equation}
Thus Eq.~\eqref{eq:single-pf} is not exact for the primitive family.  The relevant question is instead whether the accumulated negative-minor weight changes the logarithm at the scale of the boundary response.  We formulate the exact correction next, resolve its boundary-localized sectors in Sec.~\ref{sec:selector-check}, and evaluate the complete two-sided absolute-minor difference independently in Sec.~\ref{sec:absolute-bridge}.

\ssubsection{sec:pf-eval}{Polynomial evaluation and edge-mode orientation}

In this subsection, $G$ denotes the matrix in the convention used for the Pfaffian evaluation: the path-oriented representative $\widehat G$ for a coherent weighted-path model, and the original lower-oriented matrix for the primitive family below.  Define
\begin{equation}
 K_L=J_{2L}+R(G),
 \qquad
 A_L=|\Pf K_L|,
\end{equation}
define the polynomially accessible signed-minor free energy
\begin{equation}
 \Phi_L:=\ln A_L
 =\frac12\ln|\det K_L|.
 \label{eq:logdet}
\end{equation}
Let $\sigma_L=\sgn(\Pf K_L)$ and
\begin{equation}
 a_{I,J}=\sigma_L\,\epsilon(I,J)\det G[I,J],
 \qquad
 W_-(L)=\sum_{a_{I,J}<0}|a_{I,J}|.
 \label{eq:negative-weight}
\end{equation}
Since $A_L=\sum_{I,J}a_{I,J}$ while $\cS_L=\sum_{I,J}|a_{I,J}|$, one has the exact identities
\begin{align}
 \cS_L&=A_L+2W_-(L),
 \label{eq:absolute-vs-signed}\\
 \ln\cS_L&=\Phi_L+\delta_L,
 \qquad
 \delta_L=\ln\!\left(1+\frac{2W_-(L)}{A_L}\right)\ge0.
 \label{eq:delta-sign}
\end{align}
In a coherent selector chamber $W_-=0$ and Eq.~\eqref{eq:single-pf} is recovered.  A signed logarithmic determinant evaluates $\Phi_L$ in $O(L^3)$ time and $O(L^2)$ memory.  For the primitive data, exact enumeration, the boundary-pattern resolution, and the independent positive-weight calculation below play distinct roles: the first is exact at small size, the second identifies detected noncoherent sectors, and the third checks the complete two-sided absolute-minor difference at representative large sizes.

In a finite topological chain, $Z$ is invertible but its smallest singular value is exponentially small.  Once that value reaches floating-point precision, an unconstrained SVD can return the wrong relative orientation of the two edge singular vectors.  The unique finite-$L$ polar factor satisfies
\begin{equation}
 \det G=\sgn(\det Z).
 \label{eq:det-orientation}
\end{equation}
We restore this orientation by flipping the singular vector associated with the smallest singular value whenever Eq.~\eqref{eq:det-orientation} is violated.  Direct small-$L$ calculations and increased-precision checks show that this removes a numerical completion artifact.

\ssection{sec:decimation}{Exact finite-OBC reductions at arbitrary R\'enyi index}

\ssubsection{sec:unshifted-decimation}{Unshifted polyphase decomposition}

Suppose
\begin{equation}
 f(z)=g(z^d).
 \label{eq:gzd}
\end{equation}
Then $Z_{mn}$ can be nonzero only when $m$ and $n$ have the same residue modulo $d$.  A simultaneous permutation $P_d$ of rows and columns gives
\begin{equation}
 P_dZ^{(L)}P_d^{\mathsf T}
 =\bigoplus_{a=1}^{d}Z_g^{(L_a)},
 \label{eq:unshifted-Z-dec}
\end{equation}
where
\begin{equation}
 L_a=\#\{n\in\{1,\ldots,L\}:n\equiv a\pmod d\},
 \qquad \sum_{a=1}^{d}L_a=L.
\end{equation}
Because the polar decomposition respects orthogonal conjugation and direct sums,
\begin{equation}
 P_dG^{(L)}P_d^{\mathsf T}
 =\bigoplus_{a=1}^{d}G_g^{(L_a)}.
 \label{eq:unshifted-G-dec}
\end{equation}
Equations~\eqref{eq:D-direct-sum} and \eqref{eq:M-alpha} imply
\begin{equation}
 M_\alpha^{g(z^d),\OBC}(L)
 =\sum_{a=1}^{d}M_\alpha^{g,\OBC}(L_a)
 \label{eq:unshifted-M-dec}
\end{equation}
for every $\alpha$.  This is a genuine residue-chain factorization because the same physical-site permutation acts on the two Majorana species.

\ssubsection{sec:shifted-reduction}{Shifted binomials and the active block}

Consider
\begin{equation}
 f_{r,d}(z;h)=z^r(h+z^d),
 \qquad r\ge0,
 \qquad d\ge1,
 \qquad L>r.
 \label{eq:frd}
\end{equation}
Its OBC matrix is
\begin{equation}
 Z_{r,d}^{(L)}=hS_L^r+S_L^{r+d}=S_L^r(hI_L+S_L^d).
 \label{eq:Zrd}
\end{equation}
Set $N=L-r$.

\begin{theorem}[Active-block reduction]
There are independent row and column permutation matrices $P_R,P_C$ such that
\begin{equation}
 P_RZ_{r,d}^{(L)}P_C^{\mathsf T}
 =\begin{pmatrix}
 B_N^{(d)}(h)&0\\
 0&0_r
 \end{pmatrix},
 \qquad
 B_N^{(d)}(h)=hI_N+S_N^d.
 \label{eq:active-Z}
\end{equation}
\end{theorem}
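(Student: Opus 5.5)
Write the entries explicitly: $(Z_{r,d}^{(L)})_{mn}=h\,\delta_{n,m+r}+\delta_{n,m+r+d}$. Two facts follow at once. Rows $m>L-r$ vanish, because $n=m+r$ and $n=m+r+d$ both exceed $L$. Columns $n\le r$ vanish, because they would require $m=n-r\le0$. This gives $r$ zero rows and $r$ zero columns. The plan is therefore to take $P_R$ to be the identity on rows $1,\dots,N$, listed first, and to send the $r$ rows $N+1,\dots,L$ to the end. For $P_C$ we relabel columns by $n\mapsto n-r$, so that columns $r+1,\dots,L$ become positions $1,\dots,N$, and move columns $1,\dots,r$ to the last $r$ positions.

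The key step is to check that the surviving $N\times N$ block, with rows $m\in\{1,\dots,N\}$ and columns $n'=n-r\in\{1,\dots,N\}$, is exactly $B_N^{(d)}(h)$. Its entries are $h\,\delta_{n'+r,m+r}+\delta_{n'+r,m+r+d}=h\,\delta_{n',m}+\delta_{n',m+d}$. That is $(hI_N+S_N^d)_{mn'}$, and the truncation is correct. The term $\delta_{n',m+d}$ contributes only when $n'=m+d\le N$, which matches the open shift $S_N$ with $S_N^N=0$. The original constraint $n\le L$ is equivalent to $n'\le N$, so no boundary term is lost or gained. The three remaining blocks are zero: the off-diagonal blocks meet a zero row or a zero column, and the lower-right $r\times r$ block meets both.

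In matrix language the same argument reads as follows. $S_L^r$ maps $e_{n}\mapsto e_{n-r}$, and $hI_L+S_L^d$ preserves the span of $e_{r+1},\dots,e_L$ on the column side once it is composed with $S_L^r$. Concretely, $Z=S_L^r(hI_L+S_L^d)$ annihilates $e_1,\dots,e_r$ and has range in $\mathrm{span}(e_1,\dots,e_N)$. Since both permutations are genuine permutation matrices, $P_R Z P_C^{\mathsf T}$ is just a reindexing of the entries, and the entrywise computation above is all that is needed.

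I do not expect a real obstacle here; the argument is bookkeeping. The only delicate point is making sure the open-boundary truncation of $S_L^{r+d}$ restricts to $S_N^d$ exactly, with no spurious wraparound. This is the index check $n'\le N\iff n\le L$ in the previous paragraph.
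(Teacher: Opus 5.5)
Your proof is correct and follows the paper's argument. Both read off from $n-m\in\{r,r+d\}$ that the last $r$ rows and the first $r$ columns vanish, and then check that the active entries $(Z_{r,d}^{(L)})_{m,r+j}=h\delta_{mj}+\delta_{j,m+d}$ form $hI_N+S_N^d$. Your matrix-language aside about $hI_L+S_L^d$ preserving $\mathrm{span}(e_{r+1},\dots,e_L)$ is loosely phrased, but nothing depends on it, since the entrywise computation already settles the claim.
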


\begin{proof}
A nonzero entry of $Z_{r,d}^{(L)}$ satisfies $n-m=r$ or $n-m=r+d$.  Therefore the last $r$ rows and the first $r$ columns vanish.  Write an active column as $n=r+j$.  For $1\le m,j\le N$,
\begin{equation}
 (Z_{r,d}^{(L)})_{m,r+j}
 =h\delta_{mj}+\delta_{j,m+d},
\end{equation}
which is exactly $hI_N+S_N^d$.
\end{proof}

Since $h>0$, the active block is invertible.  In a canonical zero-mode occupation sector,
\begin{equation}
 P_RG_{r,d}^{(L)}P_C^{\mathsf T}
 =\polar(B_N^{(d)}(h))\oplus Q_r,
 \label{eq:active-G}
\end{equation}
where $Q_r$ is a signed permutation matrix.  Signed row and column permutations reduce $Q_r$ to $I_r$ without changing $\mathcal D_\beta$.

Now group the active indices by residues modulo $d$.  If
\begin{equation}
 N_a=\#\{n\in\{1,\ldots,N\}:n\equiv a\pmod d\},
\end{equation}
then
\begin{equation}
 \Pi_dB_N^{(d)}(h)\Pi_d^{\mathsf T}
 =\bigoplus_{\substack{1\le a\le d\\N_a>0}}
 (hI_{N_a}+S_{N_a}),
 \label{eq:active-decimation-Z}
\end{equation}
so that
\begin{equation}
 G_{r,d}^{(L)}
 \sim_{LR}
 \left[
 \bigoplus_{\substack{1\le a\le d\\N_a>0}}
 G_{\TFI}^{(N_a)}(h)
 \right]\oplus I_r.
 \label{eq:active-decimation-G}
\end{equation}
Here $\sim_{LR}$ denotes equivalence under independent signed row and column permutations.  For $r>0$, Eq.~\eqref{eq:active-decimation-G} is an exact algebraic reduction of determinant sums; it need not be interpreted as a physical qubit permutation.

\ssubsection{sec:shifted-entropy}{Entropy identities and the block Pfaffian}

Equations~\eqref{eq:D-invariance}--\eqref{eq:D-identity} and \eqref{eq:active-decimation-G} give
\begin{equation}
 \mathcal D_\beta(G_{r,d}^{(L)})
 =2^r\prod_{\substack{1\le a\le d\\N_a>0}}
 \mathcal D_\beta(G_{\TFI}^{(N_a)}).
 \label{eq:D-shifted}
\end{equation}
Consequently, for every $\alpha$,
\begin{equation}
 M_\alpha^{(r,d)}(L;h)
 =r\ln2
 +\sum_{\substack{1\le a\le d\\N_a>0}}
 M_\alpha^{\TFI}(N_a;h).
 \label{eq:M-shifted}
\end{equation}
If $L-r=d\ell$,
\begin{equation}
 M_\alpha^{(r,d)}(r+d\ell;h)
 =r\ln2+d\,M_\alpha^{\TFI}(\ell;h).
 \label{eq:M-shifted-equal}
\end{equation}
For the resource-normalized quantity,
\begin{equation}
 \widetilde M_\alpha^{(r,d)}(L;h)
 =\sum_a\widetilde M_\alpha^{\TFI}(N_a;h),
 \label{eq:Mnormalized-shifted}
\end{equation}
because $r+\sum_aN_a=L$.

At $\alpha=1/2$, let $P_N$ reverse the order of $N$ sites, let $D_N=\diag(1,-1,1,-1,\ldots)$, and define the path-oriented TFI representative
\begin{equation}
 \widehat G_{\TFI}^{(N)}(h)
 =D_NP_NG_{\TFI}^{(N)}(h)P_ND_N,
 \qquad
 K_N^{\TFI}(h)=J_{2N}+R(\widehat G_{\TFI}^{(N)}(h)).
 \label{eq:TFI-block-oriented}
\end{equation}
The signed row and column permutations in this definition leave every determinant sum invariant.  Since $2=\Pf(2J_2)$, Eq.~\eqref{eq:D-shifted} becomes one block-diagonal Pfaffian:
\begin{equation}
 \cS_L(G_{r,d}^{(L)})
 =\left|
 \Pf\!\left[
 \left(\bigoplus_aK_{N_a}^{\TFI}(h)\right)
 \oplus(2J_2)^{\oplus r}
 \right]
 \right|.
 \label{eq:shifted-pf}
\end{equation}
Pulling the direct sum back to the original interleaved ordering gives a model-dependent selector $J_{r,d}^{(L)}$ and the equivalent form
\begin{equation}
 \cS_L(G_{r,d}^{(L)})
 =\left|\Pf[R(G_{r,d}^{(L)})+J_{r,d}^{(L)}]\right|.
 \label{eq:shifted-pf-pullback}
\end{equation}

\ssection{sec:topology}{BDI topology and exact consequences for the shifted family}

\ssubsection{sec:index}{Zeros, poles, critical roots, and the BDI index}

Write a Laurent polynomial as
\begin{equation}
 f(z)=z^{q}P(z),
 \qquad P(0)\ne0,
 \qquad q\in\mathbb Z,
\end{equation}
where $q$ is the lowest Laurent power occurring in $f$.  Away from criticality, let $N_{<}$ be the number of zeros of $P$ strictly inside the unit disk, counted with multiplicity.  The BDI index is
\begin{equation}
 \omega=q+N_{<}.
 \label{eq:winding}
\end{equation}
Simple zeros on the unit circle generate massless Majorana channels; if there are $N_{\rm c}$ such zeros, the generic central charge is
\begin{equation}
 c=\frac{N_{\rm c}}{2}.
 \label{eq:c-roots}
\end{equation}
At a critical point, $\omega_{\rm c}$ is computed from zeros remaining strictly inside the disk and from the poles.  This gives the critical label $(c,\omega_{\rm c})$ used for one-dimensional BDI chains~\cite{VerresenJonesPollmann2018,JonesVerresen2019,VerresenThorngren2021,JonesVerresen2023}.

\ssubsection{sec:shifted-labels}{Labels of \texorpdfstring{$z^r(h+z^d)$}{the shifted-binomial symbol}}

The shifted binomial has $r$ zeros at the origin and $d$ roots
\begin{equation}
 z_j=h^{1/d}
 \exp\!\left[\frac{i(\pi+2\pi j)}{d}\right],
 \qquad j=0,\ldots,d-1.
 \label{eq:mobile-roots}
\end{equation}
Therefore
\begin{equation}
 \omega_{<}=r+d\quad(0<h<1),
 \qquad
 \omega_{>}=r\quad(h>1).
 \label{eq:phase-indices}
\end{equation}
At $h=1$, the $d$ mobile roots lie on the unit circle while the $r$ roots at zero remain localized:
\begin{equation}
 (c,\omega_{\rm c})=\left(\frac d2,r\right),
 \qquad
 |\Delta\omega|=d.
 \label{eq:critical-labels}
\end{equation}
The edge localization length associated with a mobile root is
\begin{equation}
 \xi=\frac{d}{|\ln h|}.
 \label{eq:xi-d}
\end{equation}
Thus the natural crossover variable is
\begin{equation}
 x=\frac{L-r}{d}|\ln h|.
 \label{eq:x-shifted}
\end{equation}

\ssubsection{sec:exact-scaling}{Exact separation of \texorpdfstring{$c$}{c}, \texorpdfstring{$\omega_{\rm c}$}{critical omega}, and \texorpdfstring{$\Delta\omega$}{Delta omega}}

For $L=r+d\ell$, Eq.~\eqref{eq:M-shifted-equal} at $\alpha=1/2$ gives
\begin{equation}
 \ln\cS_{L}^{(r,d)}(h)
 =r\ln2+d\ln\cS_{\ell}^{\TFI}(h).
 \label{eq:logS-shifted}
\end{equation}
At criticality, using the open-Ising asymptotic form
\begin{equation}
 \ln\cS_{\ell}^{\TFI}(1)
 =s_{\rm c}\ell-\frac18\ln\ell+C_{\rm c}+o(1),
 \label{eq:TFI-critical-assumed}
\end{equation}
we obtain
\begin{equation}
 \ln\cS_{L}^{(r,d)}(1)
 =s_{\rm c}^{(r,d)}L-\frac{c}{4}\ln L+C_{r,d}+o(1),
 \label{eq:critical-c}
\end{equation}
where $c=d/2$.  The logarithmic coefficient depends on the gapless channels and is independent of the common background index $r$.  Its boundary-CFT origin is established in Refs.~\cite{HoshinoPRX2026,HoshinoPRL2026}; Eq.~\eqref{eq:critical-c} is the exact lattice realization for the factorized BDI family.

Define the finite-size two-sided difference at fixed
$x=\ell|\ln h|=(L-r)|\ln h|/d$ by
\begin{equation}
 \Delta_{r,d}^{(L)}(x)
 =\ln\cS_{L}^{(r,d)}(\e^{-x/\ell})
 -\ln\cS_{L}^{(r,d)}(\e^{x/\ell}).
 \label{eq:Delta-rd-supp}
\end{equation}
We denote the corresponding one-channel finite-size crossover by
\begin{equation}
 \Delta_{\TFI}^{(\ell)}(x)
 \equiv \Delta_{0,1}^{(\ell)}(x).
 \label{eq:Delta-TFI-finite-definition}
\end{equation}
Equation~\eqref{eq:logS-shifted} gives the exact identity
\begin{equation}
 \Delta_{r,d}^{(r+d\ell)}(x)
 =d\,\Delta_{\TFI}^{(\ell)}(x).
 \label{eq:Delta-multiplication}
\end{equation}
Hence the complete crossover is independent of $r=\omega_{\rm c}$ and is multiplied by $d=|\Delta\omega|$.  If the primitive one-channel limit is
\begin{equation}
 \lim_{x\to\infty}\Delta_{\TFI}(x)=\ln2,
 \label{eq:TFI-ln2-conj}
\end{equation}
then the exact family obeys
\begin{equation}
 \lim_{x\to\infty}\Delta_{r,d}(x)
 =|\Delta\omega|\ln2.
 \label{eq:index-response-exact-family}
\end{equation}
Equation~\eqref{eq:Delta-multiplication} is exact; Eq.~\eqref{eq:TFI-ln2-conj} is the numerically supported primitive asymptotic statement.

\ssubsection{sec:chiral-family}{Critical chiral family}

For completeness, the equal-amplitude symbol
\begin{equation}
 f_\nu(z)=z^\nu+z^{-\nu}=z^{-\nu}(z^{2\nu}+1)
\end{equation}
has $2\nu$ unit-circle roots and $\nu$ poles, so
\begin{equation}
 (c,\omega_{\rm c})=(\nu,-\nu).
\end{equation}
For compatible sizes, its known critical reduction gives $2\nu$ Ising building blocks~\cite{RamirezRajabpour2025}.  Thus its critical logarithm is again $-(c/4)\ln L$ in $\ln\cS_L$, while $\omega_{\rm c}$ is nonzero.  This provides a second illustration that the critical logarithm and the surviving boundary index are distinct data.  The unequal-amplitude deformation $az^\nu+bz^{-\nu}$ does not generally inherit this reduction.

\ssection{sec:tfi-asymptotics}{Open TFI chain: bulk, critical, massive, and crossover regimes}

\ssubsection{sec:tfi-bulk}{Bulk density}

For asymptotic evaluation we use the gauge-equivalent TFI matrix
\begin{equation}
 Z_L(h)=hI_L-S_L,
 \qquad h>0.
 \label{eq:TFI-minus}
\end{equation}
The block symbol of the Pfaffian determinant yields
\begin{equation}
 \ln\cS_L(h)=Ls(h)+o(L),
\end{equation}
with
\begin{equation}
 s(h)=\frac{1}{4\pi}\int_0^{2\pi}\ln D_h(\theta)\,\dd\theta,
 \label{eq:s-bulk}
\end{equation}
\begin{equation}
 D_h(\theta)
 =2\left[
 1+\frac{h+1}{\sqrt{1+h^2-2h\cos\theta}}
 \right].
 \label{eq:D-symbol}
\end{equation}
The pointwise identity $D_h(\theta)=D_{1/h}(\theta)$ gives
\begin{equation}
 s(h)=s(1/h).
 \label{eq:s-duality}
\end{equation}
Representative values are
\begin{equation}
 s(1)=0.929695398341610\ldots,
 \qquad
 s(1/2)=s(2)=0.812566878956866\ldots.
\end{equation}
At criticality,
\begin{equation}
 s_{\rm c}=\frac12\ln2+\frac{2G_{\rm Cat}}{\pi},
 \label{eq:sc}
\end{equation}
where $G_{\rm Cat}$ is Catalan's constant.

\ssubsection{sec:tfi-finite}{Critical and massive finite-size terms}

At $h=1$,
\begin{equation}
 \ln\cS_L(1)
 =s_{\rm c}L-\frac18\ln L+C_{\rm c}+O(L^{-1}),
 \label{eq:critical-TFI}
\end{equation}
with
\begin{equation}
 C_{\rm c}=-0.2178154\ldots.
 \label{eq:Cc}
\end{equation}
The logarithmic coefficient agrees with the established open-boundary result~\cite{HoshinoPRL2026,RamirezRajabpour2025}.

For fixed $h$ away from one,
\begin{equation}
 \ln\cS_L(h)=Ls(h)+C_{\TFI}(h)+o(1),
 \label{eq:massive-TFI}
\end{equation}
where $C_{\TFI}(h)$ has distinct branches for $h<1$ and $h>1$.  Although the bulk density is duality symmetric, the physical OBC constants differ.  Numerically,
\begin{center}
\begin{tabular}{c|rrrr}
\toprule
$h$ & $0.5$ & $0.8$ & $1.25$ & $2$\\
\midrule
$C_{\TFI}(h)$ & $0.193262$ & $0.233446$ & $-0.300017$ & $-0.145763$\\
\bottomrule
\end{tabular}
\end{center}
The difference is the first indication that open-chain duality exchanges the boundary termination rather than equating the complete finite system.

\ssubsection{sec:tfi-crossover}{Double-scaling functions}

Set
\begin{equation}
 h_{<}(x,L)=\e^{-x/L},
 \qquad
 h_{>}(x,L)=\e^{x/L},
\end{equation}
where $h_{<}<1$ is the topological branch and $h_{>}>1$ is the trivial branch, and define
\begin{equation}
 \varphi_{\lessgtr}^{(L)}(x)
 =\ln\cS_L\!\left(h_{\lessgtr}(x,L)\right)
 -s_{\rm c}L+\frac18\ln L.
 \label{eq:varphi-finite}
\end{equation}
The data collapse to two distinct functions,
\begin{equation}
 \varphi_{\lessgtr}(x)=\lim_{L\to\infty}\varphi_{\lessgtr}^{(L)}(x).
\end{equation}
Near the critical point, the numerical slopes are
\begin{equation}
 \varphi_{<}(x)=C_{\rm c}+\frac{x}{8}+o(x),
 \qquad
 \varphi_{>}(x)=C_{\rm c}-\frac{x}{8}+o(x).
 \label{eq:small-x-slopes}
\end{equation}
For large $x$, the data are consistent with
\begin{equation}
 \varphi_{\lessgtr}(x)
 =-\frac{x}{4}+\frac18\ln x+\kappa_{\lessgtr}+o(1).
 \label{eq:large-x-varphi}
\end{equation}
The finite-size and scaling-limit differences are
\begin{equation}
 \Delta_{\TFI}^{(L)}(x)
 =\varphi_{<}^{(L)}(x)-\varphi_{>}^{(L)}(x),
 \qquad
 \Delta_{\TFI}(x)=\lim_{L\to\infty}\Delta_{\TFI}^{(L)}(x).
 \label{eq:Delta-TFI}
\end{equation}
It satisfies $\Delta_{\TFI}(0)=0$, has initial slope $1/4$, and approaches $\ln2$ numerically.  At $L=640$,
\begin{center}
\begin{tabular}{c|rrrrrr}
\toprule
$x$ & $0.5$ & $1$ & $2$ & $4$ & $8$\\
\midrule
$\Delta_{\TFI}^{(640)}(x)$
& $0.122943$ & $0.239586$ & $0.433682$ & $0.628262$ & $0.676789$\\
\bottomrule
\end{tabular}
\end{center}
compared with $\ln2=0.693147\ldots$.

\begin{figure}[H]
 \centering
 \includegraphics[width=0.98\linewidth]{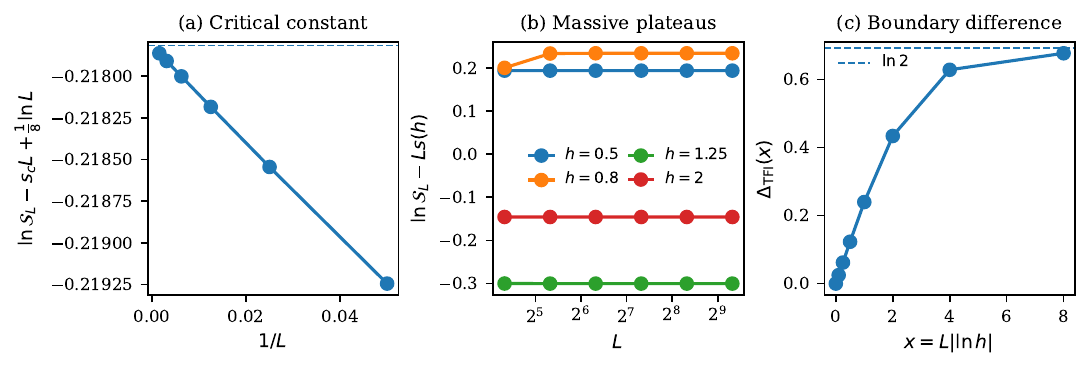}
 \caption{Finite-open-chain TFI scaling.  (a) The critical residual in Eq.~\eqref{eq:critical-TFI}.  (b) Massive boundary constants after subtracting the dual bulk density.  (c) The two-sided crossover difference in Eq.~\eqref{eq:Delta-TFI}; the dashed line is $\ln2$.}
 \label{fig:tfi-scaling}
\end{figure}

\ssection{sec:robustness}{Primitive non-decimated deformation and robustness tests}

\ssubsection{sec:primitive-roots}{A primitive BDI transition with \texorpdfstring{$|\Delta\omega|=1$}{absolute Delta omega equals 1}}

The robustness family is Eq.~\eqref{eq:primitive-Z}.  Multiplying its symbol by $z^2$ gives
\begin{equation}
 p_\lambda(z;h)=hz^2-z+\lambda.
 \label{eq:p-lambda}
\end{equation}
Choose one root to be $q$ and solve
\begin{equation}
 h(q,\lambda)=q^{-1}-\lambda q^{-2}.
 \label{eq:h-q}
\end{equation}
At $q=1$,
\begin{equation}
 h_{\rm c}=1-\lambda.
\end{equation}
For $0<\lambda<1/2$, the second root remains inside the unit disk when $q$ crosses it.  Because the symbol then has a pole of order two,
\begin{equation}
 \omega=-1\quad(q>1),
 \qquad
 \omega=0\quad(q<1),
 \qquad
 |\Delta\omega|=1.
 \label{eq:primitive-indices}
\end{equation}
At the $\lambda=0$ TFI endpoint the pole order reduces to one, with the same winding labels.
We parameterize the two sides by $m=\xi^{-1}>0$:
\begin{align}
 h_{\rm top}(m,\lambda)&=\e^{-m}-\lambda\e^{-2m},
 &q_{\rm top}&=\e^{m},
 \label{eq:h-top}\\
 h_{\rm triv}(m,\lambda)&=\e^{m}-\lambda\e^{2m},
 &q_{\rm triv}&=\e^{-m}.
 \label{eq:h-triv}
\end{align}
The support $\{0,-1,-2\}$ is primitive for $\lambda\ne0$, so this model is not an outer decimation of TFI chains.

At $\lambda=0.1$ we also change the two end onsite terms,
\begin{equation}
 Z_{11}\mapsto Z_{11}+0.35,
 \qquad
 Z_{LL}\mapsto Z_{LL}-0.20,
 \label{eq:onsite-pert}
\end{equation}
and the two end nearest-neighbor matrix elements,
\begin{equation}
 Z_{21}\mapsto Z_{21}+0.20,
 \qquad
 Z_{L,L-1}\mapsto Z_{L,L-1}-0.15.
 \label{eq:bond-pert}
\end{equation}
These are local termination changes and do not alter the thermodynamic roots.  In the boundary-robustness analysis they are applied simultaneously at the two ends, while in the independent-end test of Sec.~\ref{sec:end-factorization} their left and right components are switched independently.

\ssubsection{sec:selector-check}{Test I: direct evaluation of the Pfaffian sign correction}

Complete enumeration for $4\le L\le7$ shows that the small discrepancies previously found between the fixed-selector Pfaffian and the absolute-minor sum are genuine sign corrections, rather than roundoff.  The maximum relative discrepancies over the retained small-size grid are
\begin{center}
\begin{tabular}{c|ccc}
\toprule
Termination & clean & onsite & bond\\
\midrule
Maximum relative correction & $3.63\times10^{-12}$ & $1.12\times10^{-12}$ & $2.07\times10^{-12}$\\
\bottomrule
\end{tabular}
\end{center}
At the explicit point in Eq.~\eqref{eq:primitive-negative-minor}, the complete absolute-minor enumeration gives
\begin{equation}
 \frac{W_-(7)}{A_7}=1.53764020830828\times10^{-12},
 \qquad
 \delta_7=3.07528041661656\times10^{-12}.
 \label{eq:delta-L7}
\end{equation}
A direct all-minor calculation is not possible at the sizes entering the fits.  We therefore integrate out all unrestricted bulk selections exactly and resolve only the boundary sign sectors.

Fix the first and last $w$ physical sites.  A boundary pattern $b$ specifies, for each of the resulting $4w$ row/column vertices, whether the vertex belongs to a minor selected from $R(G)$ or to its complement selected from $J_{2L}$.  Introduce diagonal matrices $P_b^{R}$ and $P_b^{J}$ whose entries on a fixed boundary vertex are $(1,0)$ for a selected vertex and $(0,1)$ for an unselected vertex, while both entries equal one on every unrestricted interior vertex.  Define
\begin{equation}
 K_b=P_b^{J}J_{2L}P_b^{J}+P_b^{R}R(G)P_b^{R}.
 \label{eq:Kb-definition}
\end{equation}
The minor-generating identity implies
\begin{equation}
 P_b:=\Pf K_b,
 \qquad
 \sum_b P_b=\Pf K_L,
 \label{eq:pattern-sum}
\end{equation}
where $P_b$ is the signed sum of all minors with boundary pattern $b$.  Ordering boundary vertices before interior vertices gives
\begin{equation}
 K_b=
 \begin{pmatrix}
 A_b&B_b\\
 -B_b^{\mathsf T}&C
 \end{pmatrix},
 \label{eq:pattern-block}
\end{equation}
with an interior block $C$ independent of $b$.  The interior block $C$ is nonsingular for every retained parameter point and boundary window used below.  The Pfaffian Schur identity then yields
\begin{equation}
 P_b=\Pf C\,
 \Pf\!\left(A_b+B_bC^{-1}B_b^{\mathsf T}\right).
 \label{eq:pf-schur-pattern}
\end{equation}
Thus every bulk minor is summed exactly once, while each boundary pattern requires only a $4w\times4w$ Pfaffian after one bulk factorization.

Normalize the pattern weights by
\begin{equation}
 p_b=\frac{\sigma_LP_b}{A_L},
 \qquad
 \sum_b p_b=1,
 \qquad
 \rho_-^{(w)}=-\sum_{p_b<0}p_b.
 \label{eq:pattern-negative}
\end{equation}
A coarse pattern can contain cancellations between individual minors, so $\rho_-^{(w)}$ is a lower bound on $W_-/A_L$ and increases as the boundary resolution is refined.  The corresponding resolved logarithmic correction is
\begin{equation}
 \delta_L^{(w)}=\ln\!\left(1+2\rho_-^{(w)}\right).
 \label{eq:delta-window}
\end{equation}

We focus on the strongest clean deformation, $\lambda=0.2$.  At $m=0.1$ and $L=120$ on the topological ($\omega=-1$) side, an exhaustive scan of all
\begin{equation}
 2^{4w}=2^{28}=268435456
\end{equation}
patterns at $w=7$ finds $173245$ negative sectors but only
\begin{equation}
 \rho_-^{(7)}=2.348388498863445\times10^{-23},
 \qquad
 \delta_{120}^{(7)}=4.696776997726891\times10^{-23}.
 \label{eq:delta-exhaustive-w7}
\end{equation}
The normalization check gives $\sum_b p_b=1$ within $6\times10^{-15}$.  The $w=8$ and $w=9$ values below are obtained by recursively refining every negative parent sector found at the preceding resolution; they are therefore partial resolved lower bounds rather than exhaustive scans of all $2^{4w}$ patterns.  This refinement gives, respectively,
\begin{equation}
 \delta_{120}^{(8)}=4.7109547676\times10^{-23},
 \qquad
 \delta_{120}^{(9)}=4.7256001733\times10^{-23}.
 \label{eq:delta-window-convergence}
\end{equation}
The increment upon adding one boundary layer is approximately $1.5\times10^{-25}$.  Since a positive coarse pattern can hide cancellations among individual minors, these values quantify only the detected boundary-localized contribution and are not an upper bound on $W_-/A_L$.  As a sensitivity test, independently shifting the two logarithms by $10^{-10}$ leaves every fitted endpoint unchanged on the displayed scale; this exercise is not used as a bound on the physical correction.  The independent full-observable check is given in Sec.~\ref{sec:absolute-bridge}.

For completeness, the boundary-resolved corrections at the smallest size $L=12/m$ of each massive fit for $\lambda=0.2$ are
\begin{center}
\begin{tabular}{c|c|cc|c}
\toprule
$m$ & $L$ & $\delta_{\rm top}^{(w)}$ & $\delta_{\rm triv}^{(w)}$ & $\delta_{\rm top}^{(w)}-\delta_{\rm triv}^{(w)}$\\
\midrule
$0.1000$ & $120$ & $4.73\times10^{-23}$ & $1.75\times10^{-24}$ & $4.55\times10^{-23}$\\
$0.0500$ & $240$ & $6.20\times10^{-23}$ & $8.81\times10^{-24}$ & $5.32\times10^{-23}$\\
$0.0250$ & $480$ & $5.70\times10^{-23}$ & $1.88\times10^{-23}$ & $3.82\times10^{-23}$\\
$0.0125$ & $960$ & $4.96\times10^{-23}$ & $2.74\times10^{-23}$ & $2.22\times10^{-23}$\\
\bottomrule
\end{tabular}
\end{center}
At fixed $m=0.1$, the six-site resolved value is unchanged at the $10^{-25}$ level between $L=120$ and $L=240$, showing that this resolved contribution is an $O(1)$ boundary correction rather than a hidden bulk density.  The end-deformed cases are still smaller: for $\lambda=0.1$, $m=0.1$, and $L=120$, exhaustive $w=6$ scans find no negative sector on the topological side for either termination, while the trivial-side resolved corrections are $1.26\times10^{-51}$ and $1.76\times10^{-51}$ for the onsite and bond changes, respectively.

The resolved sectors are therefore extremely small, but their smallness cannot by itself certify the unobserved sectors.  The signed-Pfaffian extrapolation below is consequently reported separately from the positive-weight comparison of the complete absolute-minor response.

\ssubsection{sec:absolute-bridge}{Test II: positive doubled-Slater evaluation of the exact absolute-minor difference}

The doubled-Slater probabilities of Sec.~\ref{sec:shannon-partner} provide a sign-free estimator of the actual observable.  At matched topological and trivial points, write
\begin{equation}
 p_a(X)=2^{-L}|\det A_a(X)|^2,
 \qquad a\in\{{\rm top},{\rm triv}\},
 \label{eq:bridge-probabilities}
\end{equation}
where $A_a(X)$ is the $L\times L$ selected-row matrix of the unnormalized block $\bigl( G_a^{\mathsf T},I_L\bigr)^{\mathsf T}$; the corresponding occupied-orbital isometry is $2^{-1/2}A_a(X)$.  Define the positive bridge
\begin{equation}
 Z_t=\sum_X p_{\rm triv}(X)^{(1-t)/2}p_{\rm top}(X)^{t/2},
 \qquad 0\le t\le1.
 \label{eq:positive-bridge}
\end{equation}
Since $Z_0=2^{-L/2}\cS_L^{\rm triv}$ and $Z_1=2^{-L/2}\cS_L^{\rm top}$,
\begin{equation}
 \Delta_L^{\rm abs}(m)
 \equiv\ln\cS_L^{\rm top}-\ln\cS_L^{\rm triv}
 =\frac12\int_0^1\!\dd t\,
 \left\langle\ln p_{\rm top}-\ln p_{\rm triv}\right\rangle_t,
 \label{eq:bridge-identity}
\end{equation}
where the expectation uses the normalized weight in Eq.~\eqref{eq:positive-bridge}.  No selector sign or Pfaffian approximation enters this identity.

A configuration is an $L$-row subset of the $2L\times L$ block $\bigl( G_a^{\mathsf T},I_L\bigr)^{\mathsf T}$.  If the row in slot $r$ is replaced by a candidate row $v^{\mathsf T}$, the determinant ratio on side $a$ is
\begin{equation}
 d_a=v^{\mathsf T}A_a^{-1}e_r.
 \label{eq:bridge-det-ratio}
\end{equation}
A heat-bath exchange update therefore assigns conditional weight
\begin{equation}
 w(v|r)\propto |d_{\rm triv}|^{1-t}|d_{\rm top}|^t.
 \label{eq:bridge-heatbath}
\end{equation}
The inverse is updated by a rank-one formula and periodically recomputed.  We use eight-point Gauss--Legendre quadrature in $t$ and several independent chains from inequivalent initial bases; the uncertainties below are based on run-to-run dispersion.

Complete enumeration at $\lambda=0.2$, $m=0.1$, and $L=7$ gives
\begin{equation}
 \Delta_7^{\rm abs}=0.0652225288794837,
 \qquad
 \Delta_7^{\rm Pf}=0.0652225288778876,
 \label{eq:bridge-exact-L7}
\end{equation}
so the exact two-sided sign correction is $1.5961\times10^{-12}$.  At $L=6$ and $m=0.3$, the bridge estimate $0.12310(48)$ agrees with the exact value $0.12296268$.  Table~\ref{tab:sign-free-bridge} gives the large-size comparison.  The reported quantity is the raw two-sided difference at identical $(L,m)$; its extensive bulk term is immaterial because it cancels in the comparison between the two evaluation methods.

\begin{table}[H]
\centering
\caption{Direct positive-weight evaluation of the primitive $\lambda=0.2$ absolute-minor difference.  Parentheses denote between-run uncertainty.}
\label{tab:sign-free-bridge}
\begin{tabular}{ccccc}
\toprule
$m$ & $L$ & $\Delta_L^{\rm Pf}$ & $\Delta_L^{\rm abs}$ & $\Delta_L^{\rm abs}-\Delta_L^{\rm Pf}$\\
\midrule
$0.10$ & $40$  & $ 0.1725669$ & $ 0.17374(13)$ & $ 0.00117(13)$\\
$0.10$ & $60$  & $ 0.0147862$ & $ 0.01267(27)$ & $-0.00211(27)$\\
$0.10$ & $80$  & $-0.1807670$ & $-0.17900(36)$ & $ 0.00177(36)$\\
$0.10$ & $100$ & $-0.3830776$ & $-0.37983(38)$ & $ 0.00325(38)$\\
$0.10$ & $120$ & $-0.5864641$ & $-0.58289(53)$ & $ 0.00358(53)$\\
$0.05$ & $120$ & $ 0.0360746$ & $ 0.03460(33)$ & $-0.00148(33)$\\
\bottomrule
\end{tabular}
\end{table}

The offsets are statistically resolved relative to the finite-run dispersions quoted in the table and therefore should not be interpreted as pointwise compatible with zero.  Their absolute magnitudes nevertheless remain below $3.6\times10^{-3}$, and the available points do not establish a growth law with $L$.  A constant diagnostic fit gives $0.00074\pm0.00101$, but no constancy in $L$ or $m$ is assumed.  Short-chain calibrations in the exactly coherent TFI class show that finite-run dispersions can underestimate slow-mixing systematics.  We therefore attach a conservative mixing scale of order $10^{-2}$ and use the bridge only to conclude that no order-unity selector artifact is detected, with present direct control at the $10^{-3}$--$10^{-2}$ level; it does not convert the boundary-pattern values into a rigorous upper bound.

As a second check, sample the escort distribution
\begin{equation}
 q_a(X)=\frac{|\det A_a(X)|}{\cS_L^a},
 \qquad
 s_a(X)=\sgn\!\left[\sigma_L\epsilon(I,J)\det G_a[I,J]\right].
\end{equation}
Then
\begin{equation}
 \delta_L^a=-\ln\langle s_a\rangle_{q_a}.
 \label{eq:escort-sign}
\end{equation}
At $\lambda=0.2$, $m=0.1$, and $L=120$, no negative-selector configuration was observed among $13336$ one-sweep-spaced samples on either side; the heat-bath move rates were $0.907$ and $0.889$.  Under an independence assumption this would give a $95\%$ limit $\delta_L^a<4.5\times10^{-4}$ per side.  Counting only one effective sample per ten sweeps weakens the corresponding two-sided scale to about $9\times10^{-3}$.  This is a statistical rare-event check, not a mathematical bound.

\ssubsection{sec:critical-check}{Test III: the critical endpoint without decimation}

At $m=0$, each $\lambda$ has one simple unit-circle root and $c=1/2$.  We fit the polynomially evaluated signed-minor free energy
\begin{equation}
 \Phi_L
 =s_{\rm c}(\lambda)L
 +\beta_{\log}(\lambda)\ln L
 +C(\lambda)+\frac{a(\lambda)}{L}.
 \label{eq:critical-fit}
\end{equation}
The results are
\begin{center}
\begin{tabular}{c|rrrrr}
\toprule
$\lambda$ & $0$ & $0.05$ & $0.10$ & $0.15$ & $0.20$\\
\midrule
$\beta_{\log}$
& $-0.124999$ & $-0.125000$ & $-0.125001$ & $-0.125003$ & $-0.125006$\\
\bottomrule
\end{tabular}
\end{center}
At $\lambda=0.1$, the onsite and bond terminations give $-0.124996$ and $-0.124999$.  The sign correction is again invisible on this scale: at the strongest deformation $\lambda=0.2$, exhaustive $w=6$ boundary scans give $\delta_{120}^{(6)}=3.63\times10^{-23}$ and $\delta_{240}^{(6)}=3.77\times10^{-23}$, while no negative $w=6$ sector is resolved at $L=120$ for either $\lambda=0.1$ boundary termination.  The known open-Ising endpoint is therefore retained after both primitive bulk and local boundary deformations.

\ssubsection{sec:bulk-subtraction}{Test IV: independent bulk subtraction}

For each fixed choice of $\lambda$ and termination, define the raw two-sided quantity
\begin{equation}
 \Delta\Phi_L(m)
 =\Phi_L\!\left(h_{\rm top}(m,\lambda)\right)
 -\Phi_L\!\left(h_{\rm triv}(m,\lambda)\right),
 \label{eq:raw-response}
\end{equation}
where the fixed deformation and termination labels are suppressed.  Because the two bulk densities are unequal for $\lambda\ne0$, this difference contains a nonuniversal term linear in $L$.  Defining
$\Delta s(m)\equiv s_{\rm top}(m)-s_{\rm triv}(m)$
for that fixed dataset, we remove it directly at fixed $m$ through
\begin{equation}
 \Delta\Phi_L(m)
 =\Delta s(m)L+\cR_{1/2}(m)
 +A_{\rm ov}(m)\e^{-mL}.
 \label{eq:massive-fit}
\end{equation}
The exponential correction is fixed by the matched localization scale $|q_{\rm top}|^{-L}=|q_{\rm triv}|^{L}=\e^{-mL}$.  Equation~\eqref{eq:massive-fit} is equivalent to fitting the two sectors separately and subtracting their constants, but avoids subtracting two independently estimated $O(1)$ terms.

We evaluate nine masses,
\begin{equation}
 m=0.1,\ 0.075,\ 0.05,\ 0.0375,\ 0.025,\ 0.01875,
 \ 0.0125,\ 0.009375,\ 0.00625,
 \label{eq:mass-grid}
\end{equation}
and seven target scaled sizes $Lm\simeq8,10,12,14,16,18,20$, choosing for each target the nearest integer chain length and reaching $L=3200$.  The central fixed-mass fits use the targets $Lm\simeq12,14,16,18,20$; the two smaller target scaled sizes are retained for window-stability tests.  The largest root-mean-square residual over all clean and boundary-deformed signed-Pfaffian fits is $7.1\times10^{-8}$.  The separate positive-weight comparison in Sec.~\ref{sec:absolute-bridge} determines the current direct control of the signed-to-absolute difference.

\ssubsection{sec:nearcritical}{Test V: near-critical extrapolation and boundary robustness}

To avoid two successive extrapolations, the central analysis fits all raw values in Eqs.~\eqref{eq:raw-response} and \eqref{eq:mass-grid} simultaneously:
\begin{align}
 \Delta\Phi_L(m)
 ={}&\Delta s(m)L+A_{\rm ov}(m)\e^{-mL}
 \notag\\
 &+\cR_{1/2}^{\rm SPT}
 +u_1m\ln m+u_2m+u_3m^2\ln m+u_4m^2.
 \label{eq:response-extrapolation}
\end{align}
For each model or termination, $\Delta s(m)$ and $A_{\rm ov}(m)$ are independent nuisance parameters at every mass, whereas $\cR_{1/2}^{\rm SPT},u_1,u_2,u_3,u_4$ are common to the nine masses.  The central $Lm\ge12$ fit therefore contains $45$ raw values and $23$ parameters, leaving $22$ residual degrees of freedom.

The resulting one-channel endpoints are
\begin{table}[H]
\centering
\caption{Near-critical one-channel response from the signed-Pfaffian global fit.  The quoted $2\times10^{-5}$ variation is a conservative systematic envelope obtained by changing the scaled-size window, mass window, and retained subleading terms; the independent absolute-minor check has the lower precision stated in Sec.~\ref{sec:absolute-bridge}.}
\label{tab:plateaus}
\begin{tabular}{c|c|c}
\toprule
Test & $\cR_{1/2}^{\rm SPT}$ & $\cR_{1/2}^{\rm SPT}-\ln2$\\
\midrule
$\lambda=0$ & $0.693145509$ & $-1.67\times10^{-6}$\\
$\lambda=0.05$ & $0.693145947$ & $-1.23\times10^{-6}$\\
$\lambda=0.10$ & $0.693146777$ & $-4.03\times10^{-7}$\\
$\lambda=0.15$ & $0.693148341$ & $+1.16\times10^{-6}$\\
$\lambda=0.20$ & $0.693151356$ & $+4.18\times10^{-6}$\\
Bond termination & $0.693139811$ & $-7.37\times10^{-6}$\\
Onsite termination & $0.693140707$ & $-6.47\times10^{-6}$\\
\bottomrule
\end{tabular}
\end{table}
All seven endpoints lie within $7.4\times10^{-6}$ of $\ln2$.  The systematic ensemble uses $Lm_{\min}=10,12,14$, the full mass range and the restricted windows $m\le0.075$ and $m\le0.05$, and nested near-critical truncations obtained by omitting one or both of the $m^2\ln m$ and $m^2$ terms; a second exponential is used only as a conditioning check.  A simultaneous fit imposing one common endpoint gives $\cR_{1/2}^{\rm SPT}=0.693145493$ for $Lm\ge12$ and $\cR_{1/2}^{\rm SPT}=0.693146658$ for $Lm\ge14$, respectively $1.69\times10^{-6}$ and $5.23\times10^{-7}$ below $\ln2$.  Fits that omit the smallest mass predict it with errors below $5.5\times10^{-7}$; omitting the two smallest masses gives a maximum holdout error $1.02\times10^{-6}$.  Across the controlled signed-Pfaffian fit variants,
\begin{equation}
 |\cR_{1/2}^{\rm Pf}-\ln2|<2\times10^{-5}.
 \label{eq:response-envelope}
\end{equation}
Combined with the direct positive-weight agreement in Sec.~\ref{sec:absolute-bridge}, the data support
\begin{equation}
 \lim_{m\to0^+}
 [C_{\rm top}(m,\lambda)-C_{\rm triv}(m,\lambda)]
 =\ln2
 \label{eq:primitive-ln2}
\end{equation}
for one primitive index-changing channel.  The $2\times10^{-5}$ number characterizes the smooth Pfaffian-assisted extrapolation; the present independent control of the exact absolute-minor observable is at the $10^{-3}$--$10^{-2}$ level.

\begin{figure}[H]
 \centering
 \includegraphics[width=0.98\linewidth]{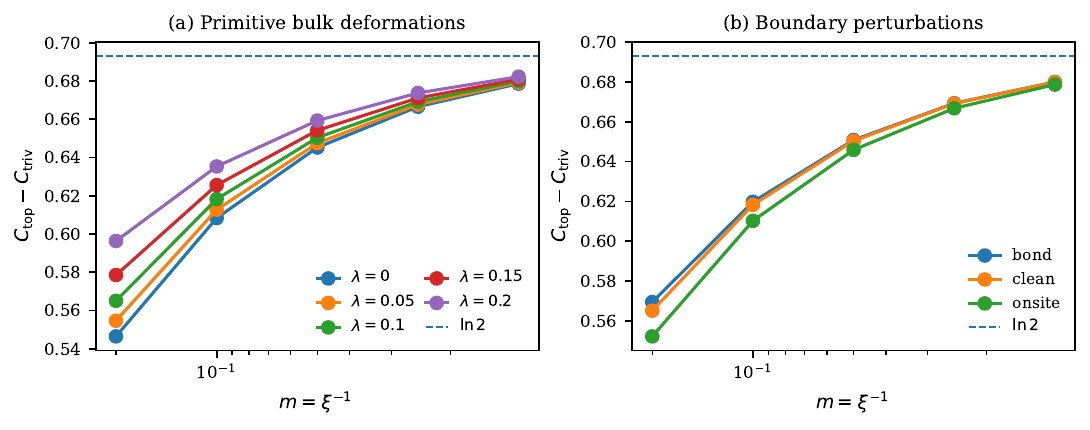}
 \caption{Bulk-subtracted massive response for the primitive family, showing representative masses from the original and enlarged grids.  (a) Bulk range-two deformations destroy exact TFI replication and bulk duality.  (b) Local end onsite and end-bond changes alter finite-mass constants.  The global analysis uses all nine masses in Eq.~\eqref{eq:mass-grid}.}
 \label{fig:robustness}
\end{figure}

\begin{figure}[H]
 \centering
 \includegraphics[width=0.92\linewidth]{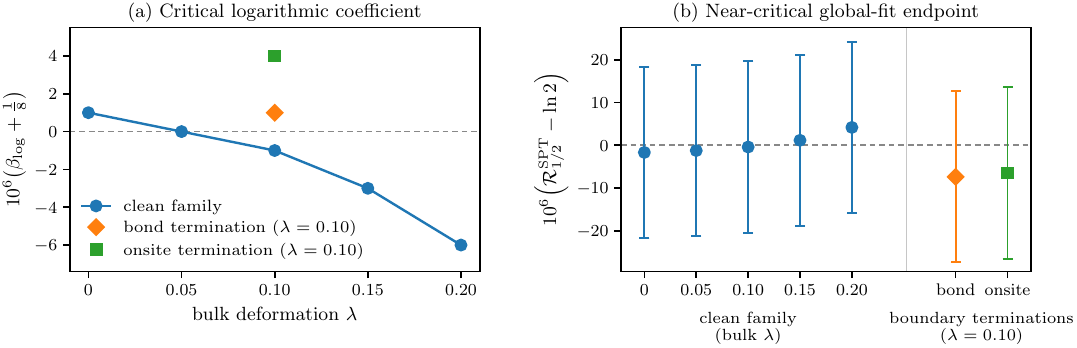}
 \caption{Numerical consistency checks. (a) Fitted critical logarithmic coefficient
relative to $-1/8$. The clean family is shown as a function of the bulk
deformation $\lambda$; both boundary-termination tests are performed at
$\lambda=0.10$. (b) Direct global-fit endpoints relative to $\ln 2$.
The error bars show the conservative $2\times10^{-5}$ variation of the signed-Pfaffian extrapolation in Eq.~\eqref{eq:response-envelope}; the direct absolute-minor comparison is given separately in Table~\ref{tab:sign-free-bridge}.}
 \label{fig:systematics}
\end{figure}

\ssubsection{sec:end-factorization}{Test VI: independent-end factorization and boundary-spectrum analysis}

We finally test whether the massive $O(1)$ contribution separates into independent left- and right-end terms.  Let
\begin{equation}
 Z_{ab}^{(L)}
 =Z_{\lambda}^{(L)}+\Delta_L^{(a)}+\Delta_R^{(b)},
 \qquad
 a,b\in\{0,{\rm on},{\rm bond}\},
 \label{eq:independent-end-Z}
\end{equation}
where $0$ denotes the clean termination, ${\rm on}$ selects the corresponding left or right term in Eq.~\eqref{eq:onsite-pert}, and ${\rm bond}$ selects the corresponding term in Eq.~\eqref{eq:bond-pert}.  These finite-rank changes preserve the real bipartite BDI form and leave the bulk Laurent symbol and winding unchanged.

We first test whether the chosen terminations introduce an additional boundary mode at representative massive points and at criticality.  Because $Z_{ab}^{(L)}$ remains lower triangular,
\begin{equation}
 \det Z_{ab}^{(L)}
 =h^{L-2}(h+a_L)(h+a_R),
 \label{eq:boundary-det-audit}
\end{equation}
where $a_L\in\{0,0.35\}$ and $a_R\in\{0,-0.20\}$.  None of these factors vanishes on the retained grid.  More stringently, for $\lambda=0.1$, $m=0.1$, and $L=120$, the topological sector has exactly one exponentially small singular value,
\begin{equation}
 8.56\times10^{-7}\le \sigma_1\le9.57\times10^{-7},
 \qquad
 8.005\times10^{-2}\le \sigma_2\le8.011\times10^{-2},
 \label{eq:boundary-singular-top}
\end{equation}
whereas the trivial sector obeys
\begin{equation}
 8.539\times10^{-2}\le \sigma_1\le8.543\times10^{-2}.
 \label{eq:boundary-singular-triv}
\end{equation}
The ranges are over all nine independent termination pairs.  The singular vectors associated with the small topological singular value remain localized at opposite ends.  Continuously multiplying all four boundary changes by a common factor $0\le t\le2$ leaves $\sigma_2^{\mathrm{top}}>0.0799$ and $\sigma_1^{\mathrm{triv}}>0.0853$, so no boundary-level crossing connects the clean and perturbed terminations.  At the critical point, fits of all nine combinations to Eq.~\eqref{eq:critical-fit} give
\begin{equation}
 -0.125049\le\beta_{\log}\le-0.124940,
 \label{eq:boundary-beta-audit}
\end{equation}
showing that the open-Ising boundary logarithm is unchanged within the numerical accuracy.

Let $\tau\in\{\mathrm{top},\mathrm{triv}\}$ label the phase.  For the exact absolute-minor free energy, write
\begin{equation}
 F_{ab}^{\tau}(L,m)
 \coloneqq\ln\cS_L(G_{ab}^{\tau})
 =\Phi_{ab}^{\tau}(L,m)+\delta_{ab}^{\tau}(L,m),
 \label{eq:independent-end-F}
\end{equation}
with $\Phi$ and $\delta$ defined in Eqs.~\eqref{eq:logdet} and \eqref{eq:delta-sign}.  The mixed-end residual is
\begin{equation}
 X_{ab}^{\tau}
 =F_{ab}^{\tau}-F_{a0}^{\tau}
  -F_{0b}^{\tau}+F_{00}^{\tau}.
 \label{eq:mixed-end-residual}
\end{equation}
This combination cancels the extensive bulk contribution and every term depending on only one end.  Hence an asymptotic form
\begin{equation}
 F_{ab}^{\tau}(L,m)
 =Ls_{\tau}(m)+c_{L,\tau}(a,m)+c_{R,\tau}(b,m)+o(1)
 \label{eq:end-additive-form}
\end{equation}
implies $X_{ab}^{\tau}\to0$ for $L/\xi\to\infty$.

The polynomially evaluated part $X_{ab}^{\Phi,\tau}$ exhibits a clean exponential decay.  Representative data at $\lambda=0.1$ and $m=0.1$ are
\begin{center}
\begin{tabular}{c|cc}
\toprule
$Lm$ & $\max_{a,b}|X_{ab}^{\Phi,\mathrm{top}}|$ & $\max_{a,b}|X_{ab}^{\Phi,\mathrm{triv}}|$\\
\midrule
$8$  & $3.90\times10^{-6}$ & $8.98\times10^{-9}$\\
$12$ & $9.17\times10^{-8}$ & $8.71\times10^{-11}$\\
$16$ & $1.98\times10^{-9}$ & $1.02\times10^{-12}$\\
\bottomrule
\end{tabular}
\end{center}
where the maxima are over $a,b\in\{{\rm on},{\rm bond}\}$.  The same decay is found for $m=0.05$ and $0.025$, with the topological sector converging more slowly, as expected from the overlap of its two edge modes.

The signed-to-absolute correction is evaluated independently.  Exact all-minor enumeration through $L=8$ gives
\begin{equation}
 \max_{a,b}\left|X_{ab}^{\tau}-X_{ab}^{\Phi,\tau}\right|
 <1.4\times10^{-14}.
 \label{eq:mixed-end-exact-audit}
\end{equation}
At $m=0.1$ and $L=120$, the boundary-pattern decomposition resolves no negative sector for any of the nine combinations at $w=4$, nor for any of the four nontrivial mixed pairs at $w=5$.  Thus no correction to the mixed-end conclusion is detected at the available boundary resolution.  This is not promoted to an all-sector upper bound.

These tests support, for the nonsingular symmetry-preserving terminations considered here,
\begin{equation}
 \cS_{L,ab}^{\tau}
 \sim \e^{Ls_{\tau}}g_{L,\tau}(a)g_{R,\tau}(b),
 \label{eq:effective-end-factorization}
\end{equation}
as an effective massive large-$L/\xi$ factorization.  The quantities $g_{L,\tau}$ and $g_{R,\tau}$ are effective boundary factors: the numerical result neither identifies them with Affleck--Ludwig $g$ factors nor extends Eq.~\eqref{eq:effective-end-factorization} to arbitrary boundary perturbations.

The tests remove the main possible artifacts: the $\lambda$ term destroys decimation, an independent bulk-slope parameter at every mass removes broken bulk duality, Eqs.~\eqref{eq:onsite-pert} and \eqref{eq:bond-pert} alter microscopic boundary scattering, the boundary-pattern calculation locates extremely small noncoherent sectors, the positive bridge checks the complete absolute-minor difference at the strongest clean deformation, and the independent-end analysis finds no evidence that the retained terminations introduce an additional boundary mode or change the critical boundary universality class.  The Pfaffian mixed-end residual vanishes exponentially with $L/\xi$, and no absolute-minor correction to this conclusion is detected at the available resolution.  The enlarged grid, global Pfaffian fit, window variations, and near-critical holdout tests give the systematic envelope in Eq.~\eqref{eq:response-envelope}.  Together with the exact multiplication in Eq.~\eqref{eq:Delta-multiplication}, they support
\begin{equation}
 \cR_{1/2}^{\rm SPT}=|\Delta\omega|\ln2
 \label{eq:main-response}
\end{equation}
as a near-critical boundary response.  Equation~\eqref{eq:main-response} is an exact consequence of the primitive value within the shifted-decimated family, while the primitive value itself remains a numerically supported universality statement.

\ssubsection{sec:trimerized-control}{Test VII: periodically modulated weighted-path control}

The weighted-path theorem gives a nondecimated exact-Pfaffian control that is logically distinct from the range-two test.  Period-two modulation is the minimal example.  We use a period-three chain to avoid any appearance of a disguised uniform dimerization and to test three inequivalent microscopic terminations.  In a lower-bidiagonal convention,
\begin{equation}
 Z_{nn}^{(q)}(m)=h_n^{(q)}(m),
 \qquad
 Z_{n+1,n}^{(q)}(m)=-J_n^{(q)},
 \label{eq:trimer-Z}
\end{equation}
with periodic critical patterns
\begin{align}
 (J_1,J_2,J_3)&=(1.70,0.65,1.25),\\
 (h_1^{\rm c},h_2^{\rm c},h_3^{\rm c})
 &=(0.6145752337021859,2.011337128479881,1.1174095158221562),
 \label{eq:trimer-parameters}
\end{align}
and $h_a(m)=\e^m h_a^{\rm c}$.  The termination $q=0,1,2$ cyclically shifts both patterns.  The products obey
\begin{equation}
 h_1^{\rm c}h_2^{\rm c}h_3^{\rm c}=J_1J_2J_3=1.38125.
\end{equation}
For a three-site cell, the bulk chiral block is
\begin{equation}
 Z(k;m)=
 \begin{pmatrix}
 h_1(m)&0&-J_3\e^{-ik}\\
 -J_1&h_2(m)&0\\
 0&-J_2&h_3(m)
 \end{pmatrix},
\end{equation}
with
\begin{equation}
 \det Z(k;m)=J_1J_2J_3\bigl(\e^{3m}-\e^{-ik}\bigr).
 \label{eq:trimer-det}
\end{equation}
Thus $m<0$ and $m>0$ differ by $|\Delta\omega|=1$, the critical point has one massless Majorana channel, and the zero-mode recurrence gives $\xi=|m|^{-1}$.  The chain is one connected path rather than a residue-class direct sum, its one-site translation symmetry and ordinary TFI self-duality are absent, and Eq.~\eqref{eq:single-pf} is exact by the weighted-path theorem.

At criticality, fits of
\begin{equation}
 \ln\cS_L=s_{\rm c}L+\beta_{\log}\ln L+C_q+O(L^{-1})
\end{equation}
for compatible lengths up to $L=1440$ give the coefficients in Table~\ref{tab:trimerized-control}.  In the massive phases, define
\begin{equation}
 D_L^{(q)}(m)=\ln\cS_L^{(q)}(-m)-\ln\cS_L^{(q)}(+m).
\end{equation}
The bulk slopes are unequal; for example $\Delta s(0.1)=5.97408\times10^{-4}$ and $\Delta s(0.025)=1.51984\times10^{-4}$.  We therefore fit
\begin{align}
 D_L^{(q)}(m)={}&L\Delta s(m)+D_{0,q}
 +a_qm\ln m+b_qm+c_qm^2\ln m+d_qm^2
 \notag\\
 &+A_q(m)\e^{-mL},
 \label{eq:trimer-global-fit}
\end{align}
using the same nine masses as Eq.~\eqref{eq:mass-grid}, scaled sizes $mL=8,10,12,14,16,18$, and compatible lengths up to $L=2880$.

\begin{table}[H]
\centering
\caption{Critical logarithm and independently bulk-subtracted endpoint for the three trimerized terminations.}
\label{tab:trimerized-control}
\begin{tabular}{cccc}
\toprule
$q$ & $\beta_{\log}$ & $D_{0,q}$ & $D_{0,q}-\ln2$\\
\midrule
$0$ & $-0.124999999805$ & $0.693144955817$ & $-2.225\times10^{-6}$\\
$1$ & $-0.124999999934$ & $0.693147745734$ & $+5.652\times10^{-7}$\\
$2$ & $-0.125000000247$ & $0.693145765407$ & $-1.415\times10^{-6}$\\
\bottomrule
\end{tabular}
\end{table}

All three terminations reproduce $\beta_{\log}=-1/8$.  Varying the upper mass window over $m_{\max}=0.1,0.075,0.05,0.0375$ gives
\begin{equation}
 \max_{q,m_{\max}}|D_{0,q}-\ln2|=7.39\times10^{-6}.
 \label{eq:trimer-envelope}
\end{equation}
This control shows that nondecimation, enlarged-unit-cell inhomogeneity, unequal two-sided bulk densities, and termination dependence are compatible with the same one-channel response inside an exactly coherent Pfaffian class.  It does not replace the range-two test: the latter is retained precisely because its selector fails and the true absolute-minor difference must be checked independently.

\ssubsection{sec:xy-exact-check}{Test VIII: complete absolute-minor checks for the generic anisotropic \texorpdfstring{$XY$}{XY} chain}

The generic anisotropic $XY$ chain provides a test outside the
fixed-selector Pfaffian chamber.  We set
\begin{equation}
 J_x=\frac{1+\gamma}{2},
 \qquad
 J_y=\frac{1-\gamma}{2},
 \qquad
 Z_{XY}^{(L)}=hI_L+J_xS_L+J_yS_L^{\mathsf T},
 \label{eq:xy-check-Z}
\end{equation}
and evaluate $G_{XY}^{(L)}=\polar(Z_{XY}^{(L)})$.  For every retained
point we compute
\begin{equation}
 \cS_L(G)=
 \sum_{k=0}^L
 \sum_{\substack{|I|=|J|=k}}
 |\det G[I,J]|
 \label{eq:xy-check-exact-D1}
\end{equation}
by complete numerical enumeration of all minors.  No Pfaffian selector
or minor-sign assumption enters this calculation.

The Laurent symbol is
\begin{equation}
 f_{XY}(z)=h+J_xz+J_yz^{-1},
\end{equation}
so its polynomial part is
\begin{equation}
 P_{XY}(z)=J_xz^2+hz+J_y.
 \label{eq:xy-polynomial}
\end{equation}
At $h=1$,
\begin{equation}
 P_{XY}(z)=(z+1)(J_xz+J_y).
 \label{eq:xy-critical-factorization}
\end{equation}
For every $0<\gamma\le1$, one root is the simple unit-circle root
$z=-1$, whereas the second root is
\begin{equation}
 z_2=-\frac{J_y}{J_x}=-\frac{1-\gamma}{1+\gamma},
 \qquad |z_2|<1.
 \label{eq:xy-second-root}
\end{equation}
Thus the critical theory contains one massless Majorana channel and has
$c=1/2$.  The limit $\gamma\to0$ is not included in this statement,
because the second root then also approaches the unit circle.

\paragraph{Critical logarithm.}
At $h=1$, define
\begin{equation}
 \beta_{\rm eff}(L,\gamma)
 =
 -L^2\left[
 \ln\cS_{L+1}-2\ln\cS_L+\ln\cS_{L-1}
 \right].
 \label{eq:xy-beta-eff-improved}
\end{equation}
For
$\ln\cS_L=sL+\beta_{\log}\ln L+C+\cdots$,
this estimator approaches $\beta_{\log}$.  We extend the complete
critical enumeration to $L=14$ for
$\gamma=0.4,0.6,0.8,1$ and to $L=13$ for
$\gamma=0.5,0.7,0.9$.

To avoid mistaking the sizeable small-$\gamma$ corrections for a change
of exponent, we fit the local estimator rather than the full
volume-law-dominated quantity:
\begin{equation}
 \beta_{\rm eff}(L,\gamma)
 =
 \beta_\infty+
 \sum_{p=1}^P\frac{a_p(\gamma)}{L^p}.
 \label{eq:xy-beta-joint-fit}
\end{equation}
A joint free-exponent fit to
$\gamma=0.7,0.8,0.9,1$ gives
\begin{center}
\begin{tabular}{c|c|cc}
\toprule
$P$ & $L_{\min}$ & $\beta_\infty$ & RMS residual\\
\midrule
3 & 6 & -0.124770973 & $2.20\times10^{-6}$ \\
3 & 7 & -0.124771340 & $1.69\times10^{-6}$ \\
3 & 8 & -0.124849242 & $6.51\times10^{-7}$ \\
4 & 6 & -0.124790963 & $1.10\times10^{-6}$ \\
4 & 7 & -0.125011133 & $1.18\times10^{-7}$ \\
4 & 8 & -0.125058502 & $1.41\times10^{-8}$ \\
\bottomrule
\end{tabular}
\end{center}
The six controlled variants have mean
\begin{equation}
 \beta_\infty=-0.124875,
 \qquad
 \max|\beta_\infty+1/8|=2.3\times10^{-4}.
 \label{eq:xy-beta-final}
\end{equation}
Thus the freely fitted exponent for the better-controlled anisotropies
agrees with $-1/8$.

The smaller anisotropies have much larger correction amplitudes and do
not independently determine the asymptote at these sizes.  As a
consistency test, we fix $\beta_{\log}=-1/8$ and fit
\begin{equation}
 \ln\cS_L
 =
 s(\gamma)L-\frac18\ln L+C(\gamma)
 +\sum_{p=1}^4\frac{b_p(\gamma)}{L^p},
 \qquad L\ge5.
 \label{eq:xy-critical-constrained}
\end{equation}
The residuals are
\begin{center}
\begin{tabular}{c|ccc}
\toprule
$\gamma$ & $L_{\max}$ & RMS residual & maximum residual\\
\midrule
0.4 & 14 & $1.24\times10^{-6}$ & $2.62\times10^{-6}$ \\
0.5 & 13 & $4.31\times10^{-8}$ & $8.07\times10^{-8}$ \\
0.6 & 14 & $8.09\times10^{-8}$ & $1.24\times10^{-7}$ \\
0.7 & 13 & $1.13\times10^{-8}$ & $1.97\times10^{-8}$ \\
0.8 & 14 & $3.68\times10^{-10}$ & $7.10\times10^{-10}$ \\
0.9 & 13 & $1.94\times10^{-10}$ & $3.18\times10^{-10}$ \\
1 & 14 & $3.71\times10^{-11}$ & $5.76\times10^{-11}$ \\
\bottomrule
\end{tabular}
\end{center}
Even at $\gamma=0.4$, where the raw local estimator is visibly
pre-asymptotic, the fixed-$-1/8$ form describes every retained size with
an RMS error $1.3\times10^{-6}$.  Taken together, the controlled
free-exponent fits at larger anisotropy and the constrained fits over
the full retained range are consistent with
\begin{equation}
 \ln\cS_L(1,\gamma)
 =
 s(\gamma)L-\frac18\ln L+O(1),
 \qquad
 M_{1/2}
 =
 a_{1/2}(\gamma)L-\frac14\ln L+O(1).
 \label{eq:xy-critical-final}
\end{equation}
For the smaller anisotropies this is a consistency statement rather
than an independent asymptotic extraction.

\begin{figure}[H]
 \centering
 \includegraphics[width=0.82\linewidth]
 {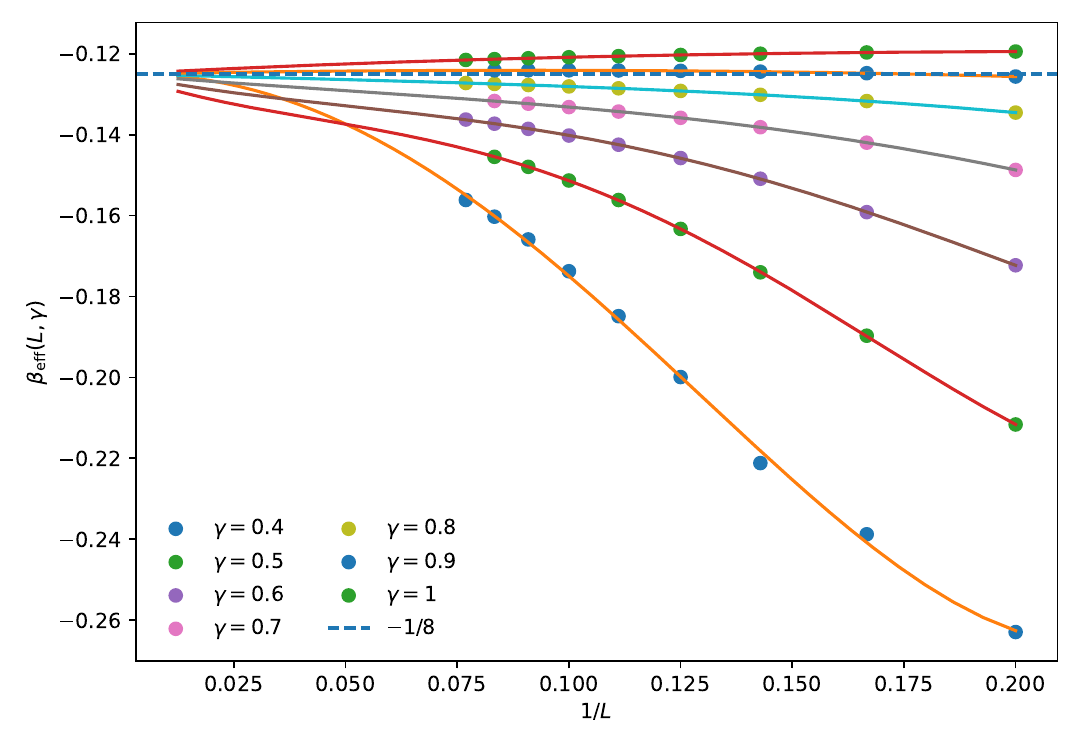}
 \caption{Complete-minor local logarithmic estimators for the generic
 open anisotropic $XY$ chain.  The displayed curves use the expected
 asymptote $-1/8$ with four inverse-size corrections.  The free-exponent
 fits are controlled for the larger anisotropies, while small $\gamma$
 exhibits a long pre-asymptotic drift and is used only as a consistency
 test of the same limiting coefficient.}
 \label{fig:xy-critical-improved}
\end{figure}

\paragraph{Nested massive response and accessible double-scaling test.}
Writing the mobile root as $z=-q$, the root equation following from
Eq.~\eqref{eq:xy-polynomial} is
\begin{equation}
 h=J_xq+\frac{J_y}{q}.
 \label{eq:xy-root-equation}
\end{equation}
Matched points with inverse localization length $m>0$ are therefore
\begin{align}
 q_{\rm top}&=\e^{-m},
 &
 h_{\rm top}(m,\gamma)&=J_x\e^{-m}+J_y\e^m,
 \\
 q_{\rm triv}&=\e^m,
 &
 h_{\rm triv}(m,\gamma)&=J_x\e^m+J_y\e^{-m}.
 \label{eq:xy-matched-improved}
\end{align}
Throughout the retained parameter window the second root remains
strictly inside the unit disk.  Hence the mobile root crossing changes
the BDI index by one, $|\Delta\omega|=1$, while the two sides have the
same leading localization length $\xi=m^{-1}$.

The boundary response discussed in the Letter is defined by a nested
massive limit.  Let
\begin{equation}
 \Delta s_\gamma(m)
 =s_{\rm top}(m,\gamma)-s_{\rm triv}(m,\gamma)
 \label{eq:xy-bulk-difference}
\end{equation}
and define the finite-size bulk-subtracted difference
\begin{align}
 \mathcal R_{\gamma,L}(m)
 ={}&
 \ln\cS_L[h_{\rm top}(m,\gamma)]
 -\ln\cS_L[h_{\rm triv}(m,\gamma)]
 \notag\\
 &-L\Delta s_\gamma(m).
 \label{eq:xy-fixed-m-response}
\end{align}
The desired response is
\begin{equation}
 \mathcal R_{1/2,\gamma}^{\rm SPT}
 =
 \lim_{m\to0^+}\lim_{L\to\infty}
 \mathcal R_{\gamma,L}(m).
 \label{eq:xy-response-final}
\end{equation}
The inner limit is taken at fixed $m>0$, so that $mL=L/\xi\to\infty$
and the two physical boundaries decouple before the critical limit is
taken.

Complete absolute-minor enumeration reaches only $L\le12$ for the
off-critical grid.  It therefore cannot simultaneously realize
$m\ll1$ and $mL\gg1$, and does not directly evaluate
Eq.~\eqref{eq:xy-response-final}.  Instead, the generic-$XY$ data test
the associated near-critical double-scaling regime by setting
$m=x/L$ and taking $L\to\infty$ at fixed $x$.  Define
\begin{equation}
 \Delta_\gamma^{(L)}(x)
 =
 \ln\cS_L[h_{\rm top}(x/L,\gamma)]
 -
 \ln\cS_L[h_{\rm triv}(x/L,\gamma)].
 \label{eq:xy-Delta-improved}
\end{equation}
This is a crossover quantity, not the fixed-$m$ thermodynamic boundary
constant.

Unlike TFI, a generic $XY$ chain is not bulk self-dual.  The paired
fields are exchanged by $m\mapsto-m$, so the bulk-density difference is
odd in $m$ and has the expansion
\begin{equation}
 L\Delta s_\gamma(x/L)
 =v_\gamma^{\rm bulk}x+O(x^3/L^2).
 \label{eq:xy-bulk-double-scaling}
\end{equation}
We subtract the exact TFI finite-size crossover and fit
\begin{equation}
 Y_\gamma^{(L)}(x)
 \equiv
 \Delta_\gamma^{(L)}(x)-\Delta_1^{(L)}(x)
 =
 v_\gamma x+
 \sum_{p=1}^P\frac1{L^p}
 \sum_{r=0}^2 c_{pr}^{(\gamma)}x^{2r+1}.
 \label{eq:xy-relative-crossover-fit}
\end{equation}
Because only double-scaling data are available, the fitted coefficient
$v_\gamma$ cannot independently separate the bulk coefficient
$v_\gamma^{\rm bulk}$ from a possible anisotropy-dependent boundary
mismatch proportional to $x$.  It should therefore be regarded as an
effective linear coefficient.  The fit tests whether an additional
\emph{nonlinear} $O(1)$ crossover function is resolved after this
linear ambiguity is removed.

The fit uses the complete $L=6,\ldots,12$ data at
$x=0.5,1,1.5,2$.  The results are
\begin{center}
\begin{tabular}{c|c|ccc}
\toprule
$\gamma$ & $P$ & $v_\gamma$ & RMS residual & maximum residual\\
\midrule
0.4 & 2 & 0.15968973 & $1.43\times10^{-4}$ & $4.96\times10^{-4}$ \\
0.4 & 3 & 0.14816710 & $6.69\times10^{-5}$ & $1.86\times10^{-4}$ \\
0.6 & 2 & 0.08568863 & $7.44\times10^{-5}$ & $1.38\times10^{-4}$ \\
0.6 & 3 & 0.08500875 & $6.96\times10^{-6}$ & $1.94\times10^{-5}$ \\
0.8 & 2 & 0.03882613 & $1.87\times10^{-5}$ & $3.91\times10^{-5}$ \\
0.8 & 3 & 0.03938950 & $9.95\times10^{-7}$ & $2.50\times10^{-6}$ \\
\bottomrule
\end{tabular}
\end{center}
The effective linear coefficient is stable for $\gamma=0.6$ and $0.8$
and shows a larger fitting-window dependence at $\gamma=0.4$, where
the available sizes are more strongly pre-asymptotic.  Independent
fixed-$x$ extrapolations are consistent with
\begin{equation}
 Y_\gamma^{(\infty)}(x)
 =v_\gamma x+B_\gamma(x),
 \qquad
 |B_\gamma(x)|\lesssim2.1\times10^{-3}
 \quad(0.5\le x\le2),
 \label{eq:xy-same-crossover}
\end{equation}
within the numerical resolution of the retained grid.  Allowing
$B_\gamma(x)=d_1x^3+d_2x^5$ does not produce a stable nonzero function
under changes of $P$ and the fitting window.  This means that no
additional nonlinear crossover is resolved over the accessible
interval; it does not prove that $B_\gamma(x)$ vanishes identically,
exclude an additional linear boundary term absorbed into $v_\gamma$,
or constrain the large-$x$ regime.

\begin{figure}[H]
 \centering
 \includegraphics[width=0.82\linewidth]
 {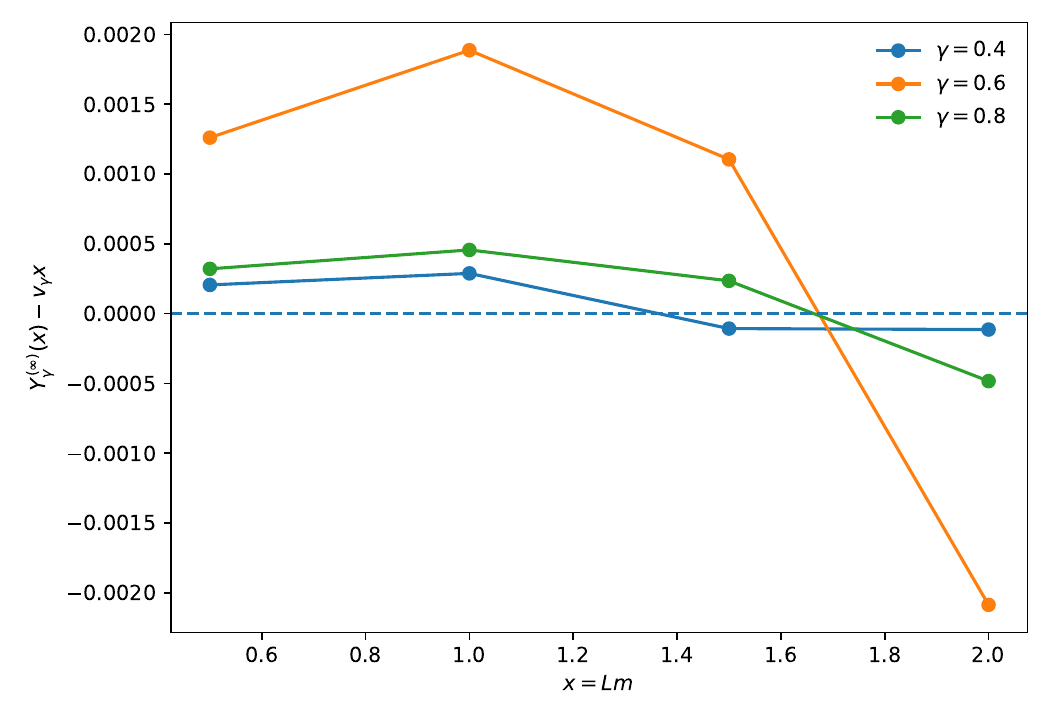}
 \caption{Residual obtained by extrapolating
 $Y_\gamma^{(L)}(x)$ to $L=\infty$ at fixed $x$ and removing its
 best-fit effective linear term $v_\gamma x$.  Within the exact-enumeration
 window $0.5\le x\le2$, no additional nonlinear crossover is resolved
 relative to TFI.  The calculation is a fixed-$x$ double-scaling test
 and does not directly implement the nested fixed-$m$ limit in
 Eq.~\eqref{eq:xy-response-final}.}
 \label{fig:xy-crossover-improved}
\end{figure}

The generic-$XY$ calculation therefore supplies two independent checks:
the critical data are consistent with the expected $-1/8$ logarithmic
coefficient, and the near-critical fixed-$x$ data reveal no nonlinear
boundary mismatch relative to TFI within the accessible interval.  It
does not independently determine the massive response in
Eq.~\eqref{eq:xy-response-final} or its large-$x$ plateau.  When combined
with the independently controlled TFI result, the absence of a resolved
finite-$x$ nonlinear mismatch is consistent with the proposed
one-channel value $\ln2$, but it should be interpreted only as an
additional non-Pfaffian crossover consistency test.

\ssection{sec:scope}{Scope, interpretation, and remaining mathematical questions}

\ssubsection{sec:two-structures}{Decimation and Pfaffian solvability are independent}

The distinction underlying the classification in Sec.~\ref{sec:model-status} is:
\begin{enumerate}[leftmargin=2.2em]
 \item An exact decomposition such as $f(z)=g(z^d)$ factorizes $\mathcal D_\beta$ and therefore holds for arbitrary $\alpha$.  It does not use the minor-summation Pfaffian.
 \item A single Pfaffian at $\alpha=1/2$ requires a coherent sign for every minor.  Free-fermion solvability, topology, or decimation alone does not guarantee this sign property.
\end{enumerate}
For the generic anisotropic $XY$ matrix in Eq.~\eqref{eq:XY-Z}, competing directed paths produce parameter-dependent cancellations in its minors.  A uniform fixed selector has not been established when $J_xJ_y\ne0$.  At the Ising endpoints, one directional coupling vanishes and the matrix returns to the bidiagonal path class.  Equal-amplitude isolated chiral points may have separate reductions, but these do not provide an off-critical formula for the generic $XY$ family.

\ssubsection{sec:physical}{Physical meaning of the response}

The total entropy is dominated by $Ls(h)$, and neither massive boundary constant is separately quantized.  A local termination change or an appended stabilizer degree of freedom can shift an individual $O(1)$ term.  Universality is observed only after: (i) extracting the two bulk densities independently, (ii) comparing the same microscopic termination on the two sides, and (iii) approaching their common critical theory.

For $z^r(h+z^d)$, the $r$ roots pinned at the origin survive on both sides.  They label the background critical boundary sector and cancel from the two-sided response.  The $d$ mobile roots cross the unit circle and give $|\Delta\omega|=d$.  One crossing channel transfers one real zero mode to each physical end.  The two boundary Majoranas form one nonlocal complex fermion with a two-dimensional occupation space, which supplies the natural interpretation of the relative factor two and hence $\ln2$.  The independent-end test in Sec.~\ref{sec:end-factorization} further shows that the corresponding $O(1)$ terms become additive for the nonsingular symmetry-preserving terminations considered here.  For reflection-related ends, this is consistent with an effective factor $\sqrt2$ from each topology-changing Majorana end, although the present analysis does not identify these factors with boundary-CFT $g$ factors.  This is an interpretation of the Pauli-participation free energy, not a thermodynamic entropy derivation.

The response is distinct from subsystem topological-magic constructions after non-Clifford doping~\cite{Nehra2025}.  Those probe nonlocally distributed bosonic SPT resources; Eq.~\eqref{eq:main-response} probes the fermionic BDI index of the undoped Jordan--Wigner chain.  The total SRE need not be an SPT invariant for its bulk-subtracted OBC response to retain symmetry-protected boundary information.

\ssubsection{sec:renyi-scope}{Exact R\'enyi-index constraints and open massive response}

For arbitrary index it is useful to distinguish the full SRE response
from the reduced $\alpha=1/2$ normalization used in the Letter.  At
paired massive points, write
\begin{equation}
 M_\alpha^a(L,m)=L s_\alpha^a(m)+C_\alpha^a(m)+o(1),
 \qquad a\in\{\mathrm{top},\mathrm{triv}\},
 \label{eq:general-alpha-expansion}
\end{equation}
and define the strict nested-limit response
\begin{equation}
 R_\alpha=
 \lim_{m\to0^+}\lim_{L\to\infty}
 \left[
 M_\alpha^{\rm top}(L,m)-M_\alpha^{\rm triv}(L,m)
 -L\Delta s_\alpha(m)
 \right],
 \label{eq:general-alpha-response}
\end{equation}
where
$\Delta s_\alpha=s_\alpha^{\rm top}-s_\alpha^{\rm triv}$ is retained
independently at every fixed mass.  The order in
Eq.~\eqref{eq:general-alpha-response} is essential: fixed $mL$ defines
a crossover trajectory rather than the massive boundary constant.
At $\alpha=1/2$, Eq.~\eqref{eq:M-half} gives
\begin{equation}
 R_{1/2}=2\cR_{1/2}
 \longrightarrow 2|\Delta\omega|\ln2,
 \label{eq:full-half-response}
\end{equation}
so the reduced response of the Letter is one half of the full
SRE---and, by Eq.~\eqref{eq:stabilizer-shannon}, one half of the
occupation-basis Shannon--R\'enyi---boundary difference.  At the
opposite endpoint, Eqs.~\eqref{eq:pmax} and
\eqref{eq:min-entropy} imply
\begin{equation}
 R_\infty=0
 \label{eq:infty-response}
\end{equation}
exactly for equal-length chains with the same finite-size convention.

The critical TFI/SSH participation problem has a boundary R\'enyi
transition at $\alpha=4$~\cite{Stephan2011,Tarighi2022,
RamirezRajabpour2025}.  In the full-SRE normalization,
\begin{equation}
 M_\alpha(L,h_c)=\mu_\alpha L+b_\alpha\ln L+O(1),
 \qquad
 b_\alpha=
 \begin{cases}
 -\tfrac14,&\alpha<4,\\
 -\tfrac16,&\alpha=4,\\
 0,&\alpha>4.
 \end{cases}
 \label{eq:critical-renyi-transition}
\end{equation}
This critical boundary transition, the Letter result in
Eq.~\eqref{eq:full-half-response}, and the exact endpoint
Eq.~\eqref{eq:infty-response} show that the massive response has a
nontrivial R\'enyi-index structure.  They do not determine the strict nested-limit function
$R_\alpha$: no value is assigned here at $\alpha=4$, and no
interpolation or locked-side branch is assumed.  Determining
$R_\alpha$ away from $\alpha=1/2$ requires an independent
thermodynamic extrapolation at every fixed mass, including a fitted
bulk-density difference, before the limit $m\to0^+$ is taken.


\end{document}